\newtheorem{theo}{Theorem}[section]
\newtheorem{lemma}{Lemma}[section]
\newtheorem{df}{Definition}[section]
\newtheorem{cor}{Corollary}[section]
\newtheorem{assump}{Assumption}[section]
\newtheorem{assert}{Assertion}[section]
\newtheorem{remark}{Remark}[section]
\newcommand{\bl}{\begin{lemma}}
\newcommand{\el}{\end{lemma}}
\newcommand{\be}{\begin{equation}}
\newcommand{\ee}{\end{equation}}
\newcommand{\beqn}{\begin{eqnarray}}
\newcommand{\eeqn}{\end{eqnarray}}
\newcommand{\bt}{\begin{theo}}
\newcommand{\et}{\end{theo}}
\newcommand{\bd}{\begin{df}}
\newcommand{\ed}{\end{df}}
\newcommand{\ba}{\begin{assump}}
\newcommand{\ea}{\end{assump}}
\newcommand{\bass}{\begin{assert}}
\newcommand{\eass}{\end{assert}}
\newcommand{\brem}{\begin{remark}}
\newcommand{\erem}{\end{remark}}
\newcommand{\bc}{\begin{cor}}
\newcommand{\ec}{\end{cor}}
\newcommand{\BB}{{\cal B}}
\newcommand{\LL}{{\cal L}}
\newcommand{\NN}{{\cal N}}
\newcommand{\pt}{\tilde{p}}
\newcommand{\Rt}{\tilde{R}}
\newcommand{\St}{\tilde{S}}
\newcommand{\So}{\overline{S}}
\newcommand{\Ut}{\tilde{U}}
\newcommand{\Vt}{\tilde{V}}
\newcommand{\OBB}{\overline{\BB}}
\newcommand{\nv}{\vec{n}}
\newcommand{\fv}{\vec{f}}
\numberwithin{equation}{section}
\long\def\comment#1{}
\title{A new model for multi-commodity macroscopic
modeling of complex traffic networks
\author{Matthew Wright, Gabriel Gomes, Roberto Horowitz, Alex A. Kurzhanskiy}
}
\begin{document}
\maketitle

\begin{abstract}
We propose a macroscopic modeling framework for
a network of roads and multi-commodity traffic.
The proposed framework is based on the Lighthill-Whitham-Richards kinematic
wave theory; more precisely, on its discretization,
the Cell Transmission Model (CTM),
adapted for networks and multi-commodity traffic.
The resulting model is called the Link-Node CTM (LNCTM).

In the LNCTM, we use the fundamental diagram of
an ``inverse lambda'' shape that allows modeling of the
capacity drop and the hysteresis behavior of the traffic state
in a link that goes from free flow to congestion and back.

A model of the node with multiple input and multiple output links
accepting multi-commodity traffic is a cornerstone of the LNCTM.
We present the multi-input-multi-output (MIMO) node model 
for multi-commodity traffic that supersedes previously developed node
models.
The analysis and comparison with previous node models are provided.

Sometimes, certain traffic commodities may choose between 
multiple output links in a node based on the current traffic state
of the node's input and output links. 
For such situations, we propose
a local traffic assignment algorithm that computes how incoming
traffic of a certain commodity should be distributed between output
links, if this information is not known a priori.
\end{abstract}

{\bf Keywords}: macroscopic first order traffic model, first order node model,
multi-commodity traffic, dynamic traffic assignment, congestion games,
managed lanes

\section{Introduction}\label{sec_intro}
\defcitealias{csmps}{Caltrans, 2015}

Traffic simulation models are important tools for traffic engineers and practitioners. As in other disciplines such as climatology, population dynamics, etc., traffic models have helped to deepen our understanding of traffic behavior. They are widely used in transportation planning projects in which capital investments must be justified with simulation-based studies~\citepalias{csmps}.  Recently, with the increased interest in Integrated Corridor Management (ICM) and Decision Support Systems (DSS), traffic models have found a new role in real-time operations. The requirements for models used in the real-time context, both in terms of execution speed and modeling accuracy, are significantly higher than in the planning context. A model used within an on-line traffic prediction framework, for example, must adapt to incidents, weather changes, special events, as well as to the normal day-to-day variations in demand. It must also be capable of capturing all aspects of the transportation system that are relevant to the operator's decision making process. These may include arterials as well as freeways, HOV and HOT lanes, ramp meters, variable speed limits, etc. They may also include multiple modes of travel (buses, trains, cars) and their respective emission signatures. Traffic models are also a central component of state estimation~\citep{pf}, in which an entire ensemble of simulations covering a range of possible states must be executed in real time. 

The ICM and traffic DSS systems that have been deployed in the U.S. are all driven by microsimulation models. These models simulate the interactions between individual vehicles, and are therefore much more computationally expensive than macroscopic models such as the Cell Transmission Model~\citep{daganzo94, daganzo95a}. One of the reasons for this discrepancy is that macroscopic models have lagged behind their commercial microscopic counterparts in their ability to simulate complex configurations such as HOV lanes, arterial/freeway interactions, and multi-commodity flows. 

This paper attempts to narrow the gap by introducing some new features for macroscopic models. These features are gathered in a model we call the Link-Node Cell Transmission Model, which, as its name implies, is a generalization of Daganzo's original CTM. Next we describe each of the new features.

First we introduce a link model that incorporates the notions of capacity drop and hysteresis that have been observed in empirical studies~\citep{capdrop}. The term ``capacity drop'' refers to the observation that the maximum flow that can be measured directly downstream of a bottleneck is lower if the bottleneck is active than if it is not~\citep{capdrop2}. Hysteresis in traffic dynamics refers to the fact that recovery from congestion usually follows a different path in the flow/density plane than the descent into congestion~\citep{hyst}.  Both of these together can be modeled with a special form of the fundamental diagram commonly referred to as the ``inverted lambda'' form~\citep{koshi1983}\footnote{Other names for this shape used by various authors include ``reverse lambda'' and ``backwards lambda.''}.  Of note is that the ``inverted lambda'' destroys the static functional relationship between flow and density, and therefore requires an extra state, which we will call the ``congestion metastate''. 

The second contribution of the paper relates to the treatment of network nodes. Node models determine how congestion will travel and distribute from one link to upstream links. For example, the node model determines whether congestion on a freeway will remain on the mainline or spill through the onramps onto the streets. We present a new node model which unifies the existing models of \citet{tampere11} and \citet{bliemer07}. This model uses input link priority parameters and can be applied to multi-commodity traffic. Input link priorities define how output supply should be allocated for incoming flows. In~\citet{tampere11}, input link priorities were assigned implicitly. Our node model admits arbitrary input link priorities. 

Third, we propose a relaxation of the first-in-first-out principle (FIFO) that is common to most if not all macroscopic models. The FIFO principle states that all vehicles in a link must travel at the same speed, irrespective of their type or destination. This is not a good approximation in the case of multi-lane traffic, in which each lane is allowed to move at its own speed. Applied to a freeway, it means that congestion backing into the mainline from an offramp will immediately block all lanes. In reality vehicles headed for a congested offramp will tend to queue in the slow lane, while leaving the remaining lanes to continuing traffic. The solution to this problem adopted by authors such as~\citet{shiomi2015} is to assign a state variable to every lane in every link. We propose here a simpler, more computationally efficient scheme in which split ratios are adjusted depending on the downstream availability of space. We allow FIFO to be relaxed across destinations, while preserving FIFO across vehicle types. 

The fourth contribution of this article relates to traffic behavior at complex junctions such as access points to managed lanes (e.g., HOV and HOT lanes). Drivers are allowed to choose whether or not to use a managed lane, and they make that choice based on the current observed state of congestion. For example, the incentive to use an HOT lane is greater when the travel time difference between the HOT and the general purpose lanes is large. Thus, a driver's decision to pay for HOT access will depend on the expected travel times and the price. Split ratios in this situation are not known beforehand -- they are endogenous to the model. We develop a split ratio assignment algorithm\footnote{In this paper, the terms ``local [dynamic] traffic assignment'' and ``split ratio assignment'' are used interchangeably.}  for this problem in which the split ratios across an HOV or HOT gate are computed by the simulator. This same algorithm may be adapted to other situations in which turning decisions depend on local conditions, as is the case when arterial traffic routes itself around an obstruction. 

The paper is organized as follows.  Section~\ref{sec_lnctm}, describes the Link-Node Cell Transmission Model (LNCTM). The main feature of this Section is the ``inverse lambda''-shaped fundamental diagram. Section~\ref{sec_node_model} presents our new traffic node model, starting with the analysis of the simpler multiple-input-single-output (MISO) case, then moving to the general multiple-input-multiple-output (MIMO) case. Then, we introduce the relaxed FIFO rule. The proposed node model is used for computing cross-flows into and out of HOV lanes along the freeway. Section~\ref{sec_sr_assignment} describes the split ratio assignment algorithm. Section~\ref{sec_conclusion} concludes the paper by summarizing the results. For the convenience of the reader, the notation used in this paper is summarized in the Appendix.

\section{Link-Node Cell Transmission Model}\label{sec_lnctm}
The LNCTM models traffic flow in a road network consisting of
links $\LL$ and nodes $\NN$,
where links represent stretches of roads, and nodes represent junctions
that connect links.
A node always has at least one input and at least one output link.
A link is called \emph{ordinary} if it has both begin and end nodes.
A link with no begin node is called \emph{origin}, and a link with no end
node is called \emph{destination}.
Origins are links through which vehicles enter the system,
and destinations are links that let vehicles out.

\begin{figure}[htb]
\centering
\includegraphics[width=3in]{./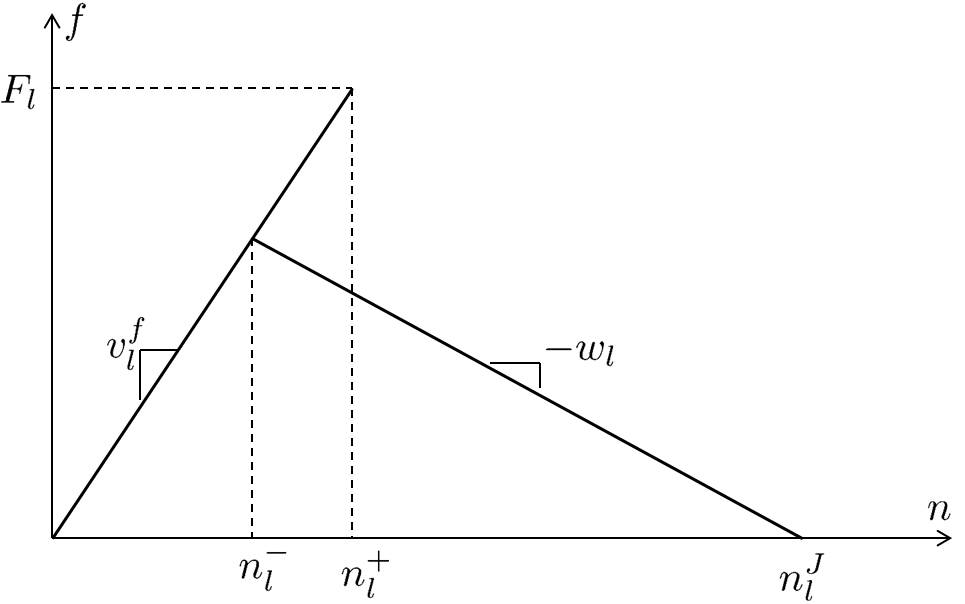}
\caption{Fundamental diagram.}
\label{fig-fd}
\end{figure}

Each link $l\in\LL$ is characterized by its length and
the time dependent fundamental diagram,
a flow-density relationship presented in Figure \ref{fig-fd}.
A fundamental diagram may be \emph{time-varying}
and is defined by 4 values: capacity $F_l(t)$,
free flow speed $v_l^{f}(t)$, congestion wave speed $w_l(t)$ and the jam
density $n^J_l(t)$.
In this paper we assume that densities, flows and speeds are
\emph{normalized} by link lengths and discretization time step;
\footnote{Given original (not normalized) capacity $\tilde{F}_l$
specified in vehicles per hour (vph), free flow speed $\tilde{v}_l^{f}$ and
congestion wave speed $\tilde{w}_l$ specified in miles per hour (mph),
and jam density $\tilde{n}^J_l$ specified in vehicles per mile (vpm),
as well as link length $\Delta x_l$ and discretization time step $\Delta t$,
normalized values are $F_l=\tilde{F}_l\Delta t$ specified in vehicles
per time period $\Delta t$,
$v_l^{f}=\tilde{v}_l^{f}\frac{\Delta t}{\Delta x_l}$ and
$w_l=\tilde{w}_l\frac{\Delta t}{\Delta x_l}$, both unitless, and
$n_l^J = \tilde{n}_l^J \Delta x_l$ specified in vehicles.}
and that free flow speed $v_l^{f}(t)$ and congestion wave speed $w_l(t)$ satisfy
the Courant-Friedrichs-Lewy (CFL) condition \citep{cfl28}:
$0\leq v_l^{f}(t), w_l(t)\leq 1$.
\footnote{The CFL condition is the necessary condition for convergence
while solving hyperbolic PDEs numerically.}
The values $n_l^{-}(t)=\frac{w_l(t)n_J(t)}{v_l^{f}(t) + w_l(t)}$
and $n_l^{+}(t)=\frac{F_l(t)}{v_l^{f}(t)}$ are called \emph{low} and \emph{high
critical density} respectively.
Unless $n_l^{-}(t)=n_l^{+}(t)$, when it assumes triangular shape, 
the fundamental diagram is not a function of density:
$n_l(t)\in\left(n_l^{-}(t), n_l^{+}(t)\right]$ admits two possible flow values.

Each node $\nu\in\NN$ with $M$ input and $N$ output links is characterized by
time dependent mutual restriction coefficients $\{\eta_{jj'}(t)\}^i$,
\footnote{Mutual restriction coefficients determine how flow restrictions
in output links influence each other, when the first-in-first-out (FIFO)
condition us relaxed. This concept is explained in
Section~\ref{subsec_simo}.}
input link priorities $\{p_i(t)\}$ and
partially defined split ratios $\{\beta_{ij}^c(t)\}$,
\footnote{Split ratios may also be fully defined or fully undefined.}
where $C$ is the number of vehicle types;
$i=1,\dots,M$, $j,j'=1,\dots,N$ and $c=1,\dots,C$.

The state of the system at time $t$ is described by the number of vehicles
per commodity in each link:
$\nv_l(t)=\left[n_l^1(t), \dots, n_l^C(t)\right]^T$, where 
$n_l^c(t)$ represents the number of vehicles
of type $c$ in link $l$ at time $t$.
The state update equation for link $l\in\LL$ is:
\be
\nv_l(t+1) = \nv_l(t) + \left(\fv_l^{in}(t) - \fv_l^{out}(t)\right),
\label{eq_link_state_update}
\ee
where $\fv_l^{in}(t) = \left[f_l^{1, in}, \dots, f_l^{C, in}\right]^T$
is the vector of commodity flows coming into link $l$ during this time step,
and $\fv_l^{out}(t) = \left[f_l^{1, out}, \dots, f_l^{C, out}\right]^T$
is the vector of commodity flows leaving link $l$ during this time step.

For ordinary and destination links, $\fv_l^{in}(t)$ is obtained
from the begin node: given a begin node with $M$ input links,
\be
f_l^{c, in}(t) = \sum_{i=1}^M f_{il}^c(t), \;\;\; c=1,\dots,C.
\label{eq_link_f_in}
\ee
For origin links, 
\be
f_l^{c,in}(t) = d_l^c(t),
\label{eq_link_f_in_origin}
\ee
where $d_l^c(t)$ denotes commodity demand at time $t$,
which is an exogenous input to the model, specified in vehicles per
discretization step $\Delta t$.

For ordinary and origin links, $\fv_l^{out}(t)$ is obtained
from the end node: given an end node with $N$ output links,
\be
f_l^{c, out}(t) = \sum_{j=1}^N f_{lj}^c(t), \;\;\; c=1,\dots,C.
\label{eq_link_f_out}
\ee
For destination links, 
\be
f_l^{c, out}(t) = v_l^{f}(t) n_l^c(t) \min\left\{1,
\frac{F_l(t)}{\sum_{c'=1}^C v_l^{f}(t)n_l^{c'}(t)}
\right\}, \;\;\; c=1,\dots, C.
\label{eq_link_f_out_destination}
\ee

The values $f_{il}^c(t)$ and $f_{lj}^c(t)$ are computed by the 
node model that is addressed in detail
in Section~\ref{sec_node_model}.\footnote{More precisely, by the MIMO
algorithm with relaxed FIFO condition described in Section
\ref{subsec_mimo_relaxed}.}

For each link $l\in\LL$ we will also define a congestion metastate:
\be
\theta_l(t) = \left\{\begin{array}{ll}
0 & n_l(t)\leq n^-_l(t),\\
1 & n_l(t)> n^+_l(t),\\
\theta_l(t-1) & n_l^-(t)<n_l(t)\leq n_l^+(t),\end{array}\right.
\label{eq_metastate}
\ee
where $n_l(t)=\sum_{c=1}^C n_l^c(t)$.
This metastate helps determining which constraint of the fundamental
diagram is activated when we compute the receive function for a link.

Now we can formally describe the LNCTM that runs for $T$ time steps.
\begin{enumerate}
\item Initialize:
\begin{eqnarray*}
n_l^c(0) & := & n_{l,0}^c;\\
\theta_l(0) & := & \theta_{l,0};\\
t & := & 0
\end{eqnarray*}
for all $l\in\LL$ and $c=1,\dots,C$, where $n_{l,0}^c$ and $\theta_{l,0}$
are initial conditions.

\item Apply all the control functions that modify system parameters
(fundamental diagrams, input priorities) and/or system state.
A control function may represent ramp metering, variable speed limit,
managed lane policy, etc.
Control functions may be \emph{open-loop} (if they depend only on time)
and \emph{closed-loop} (if they depend on time and system state).
This step is optional.

\item For each link $l\in\LL$ and commodity $c=1,\dots,C$
define the \emph{send} function:
\be
S_l^c(t) = \left\{\begin{array}{ll}
v_l^{f}(t)n_l^c(t)\min\left\{1,
\frac{F_l(t)}{v_l^{f}(t)\sum_{c=1}^C n_l^c(t)}\right\}, &
l \mbox{ is an ordinary link or a destination}, \\
d_l^c(t)\min\left\{1,
\frac{F_l(t)}{\sum_{c=1}^C d_l^c(t)}\right\}, &
l \mbox{ is an origin}.\end{array}\right.
\label{eq_lnctm_send_function}
\ee

\item For each link $l\in\LL$ define the \emph{receive} function:
\be
R_l(t) = \left\{\begin{array}{ll}
\left(1-\theta_l(t)\right)F_l(t) + 
\theta_l(t)w_l(t)\left(n_l^J(t) - \sum_{c=1}^C n_l^c(t)\right), & 
l \mbox{ is an ordinary link or a destination}, \\
\infty, &
l \mbox{ is an origin}.\end{array}\right.
\label{eq_lnctm_receive_function}
\ee

\item For each node $\nu\in\NN$ that has undefined split ratios,
given its input link priorities $\{p_i(t)\}$,
send functions $S_i^c(t)$ and receive functions $R_j(t)$,
compute the undefined split ratios $\{\beta_{ij}^c(t)\}$ according to
the algorithm from Section \ref{sec_sr_assignment}.

\item For each node $\nu\in\NN$,
given its mutual restriction coefficients
$\{\eta_{jj'}^i(t)\}$, input link priorities $\{p_i(t)\}$
and split ratios $\{\beta_{ij}^c(t)\}$,
send functions $S_i^c(t)$ and receive functions $R_j(t)$,
compute input-output
flows $f_{ij}^c$ according to the algorithm from Section \ref{subsec_mimo_relaxed}.

\item For each link $l\in\LL$, compute $\fv_l^{in}(t)$ 
using expressions (\ref{eq_link_f_in})-(\ref{eq_link_f_in_origin})
and $\fv_l^{out}(t)$ using expressions
(\ref{eq_link_f_out})-(\ref{eq_link_f_out_destination}).

\item For each link $l\in\LL$, update the state $\nv_l(t+1)$ according
to the conservation equation (\ref{eq_link_state_update}),
and the metastate $\theta_l(t+1)$ according to its definition
(\ref{eq_metastate}).

\item If $t=T$, then stop, otherwise set $t := t+1$ and return to step 2.
\end{enumerate}

Traffic speed for link $l$ is computed as a ratio of total flow
leaving this link to the total number of vehicles in this link:
\be
v_l(t) = \left\{\begin{array}{ll}
\frac{\sum_{c=1}^C f_l^{c, out}}{\sum_{c=1}^C n_l^c}, &
\mbox{ if } \sum_{c=1}^C n_l^c > 0, \\
v_l^{f}(t), & \mbox{ otherwise}.
\end{array}\right.
\label{eq_speed_formula}
\ee
Defined this way, $v_l(t)\in[0, v_l^f(t)]$.

\section{Node Model}\label{sec_node_model}
As mentioned in Section~\ref{sec_lnctm}, at each timestep of the LNCTM,
traffic flows between links are calculated at the node level,
such that flows through the node are functions of the state of all links
joined at the node (step 5 of the LNCTM algorithm, see
Section~\ref{sec_lnctm}).
For nodes with simple link configurations, such as
the single-chain case with $M=1,N=1$, these calculations are 
straightforward, but the present problem requires a node model formulation 
for computing input-output for general values of $M, N,$ and $C$.

In~\citet{tampere11}, the authors proposed a list of eight requirements
for any general node model.
With our amendments, this list looks as follows.
\begin{enumerate}
\item Applicability to general numbers of input links $M$ and
output links $N$.
We extend this requirement to include 
the general number of traffic commodities $C$.

\item Maximization of the total flow coming out of the node.
According to~\citet{tampere11}, it means that
``each flow should be actively restricted by one of the constraints,
otherwise it would increase until it hits some constraint''.
When a node model is formulated as a constrained optimization problem,
its solution will automatically satisfy this requirement.
However, what this requirement really means is that constraints should be
stated \emph{correctly} and not be overly simplified and, thus,
overly restrictive for the sake of convenient problem formulation.

\item Non-negativity of all input-output flows.

\item Flow conservation: total flow entering the node must be equal to the
total flow exiting.
This requirement is automatically satisfied, since we deal directly with
input-output flows inside the node.

\item Satisfaction of demand and supply constraints.

\item Satisfaction of the first-in-first-out (FIFO) constraint: if a single
destination $j$ for a given $i$ is not able to accept all demand
from $i$ to $j$, all flows from $i$ are also constrained.

We believe that in some situations, the FIFO constraint may be too restrictive.
It should not be completely eliminated, however, but must be relaxed through
a parametrization.
This FIFO relaxation is discribed in Sections~\ref{subsec_simo}
and~\ref{subsec_mimo_relaxed}.
Parameters that we refer to as \emph{mutual restriction coefficients}
allow configuring each pair of node's output links to anything between
no FIFO at all and fully enforced FIFO.

\item Satisfaction of the invariance principle.
If the flow from some input link $i$ is restricted by the available output
supply, this input link enters a congested regime.
This creates a queue in this input link and causes its demand $S_i$ to jump
to capacity $F_i$ in an infinitesimal time, and therefore,
a node model should yield solutions that are invariant to replacing $S_i$ with
$C_i$ when flow from input link $i$ is supply constrained~\citep{lebacque05}.

Although the invariance principle is especially important in developing
numerical methods for continuous time solutions, whereas our traffic
model is discrete in space and time, and its time step $\Delta t$ is
not infinitesimal, we retain it in the list of node model requirements
and will revisit it in Section~\ref{subsec_bliemer_tampere}.

\item Satisfaction of the supply constraint interaction rule (SCIR),
which is supposed to represent the aggregate driver behavior at
congested nodes.
Following~\citet{gentile07}, in~\citet{tampere11} it was proposed to allocate
supply for incoming flows proportionally to input link capacities.

In this paper we propose the concept of input link priorities
that can be represented by arbitrary nonnegative values (or functions)
as SCIR.
Input link priorities will define the allocation of the output supply
for incoming flows.
It must be noted that
the wrong choice of input link priorities may lead to violation of the
invariance principle.
\end{enumerate}

We will add the 9th requirement concerning multi-commodity
nature of modeled traffic to this list.
\begin{enumerate}
\item[9.] Supply restriction on a flow from any given input link is
imposed on commodity components of this flow proportionally to their 
presence in this link.
\end{enumerate}
As opposed to the first eight, we do not view requirement 9
as a general principle.
Rather, we list it here as a rule that we follow throughout this paper.

Further we will provide the mathematical formulation
of the listed requirements and build the node model that satisfies them.
This Section is organized as follows.
We start by describing the merge problem --- the multi-input-multi-output
(MISO) node in Section~\ref{subsec_miso} to explain the concept of input 
link priorities.
From there we move to the general multi-input-multi-output (MIMO) node
in Section~\ref{subsec_mimo}.
Then, we discuss the relationship of the proposed node model with
the node model of~\citet{tampere11} and also compare it with the
node model presented in~\citet{bliemer07} 
in Section~\ref{subsec_bliemer_tampere}.
In Section~\ref{subsec_simo} we propose a way of relaxing the FIFO
condition and explain it in the case of single-input-multi-output (SIMO) node.
Finally, in Section~\ref{subsec_mimo_relaxed} we proceed to the
general MIMO node with relaxed FIFO condition.


\subsection{Multiple-Input-Single-Output (MISO) Node}\label{subsec_miso}
First, we consider a node with $M$ input links and 1 output link.
The number of vehicles of type $c$ that input link $i$ wants to send is $S_i^c$.
The flow entering from input link $i$ has priority $p_i\geq 0$.
Here $i=1,\dots,M$ and $c=1,\dots,C$.
The output link can receive $R$ vehicles.

Casting the computation of input-output flows $f_{i1}^c$ as a mathematical
programming problem, we arrive at:
\be
\max\left(\sum_{i=1}^M\sum_{c=1}^C f_{i1}^c\right), \label{miso_objective}
\ee
subject to:
\begin{eqnarray}
& & f_{i1}^c \geq 0, \;\; i=1,\dots,M, \; c=1,\dots,C \;\; \mbox{ --- non-negativity constraint};
\label{miso_nonnegativity_constraint} \\
& & f_{i1}^c \leq S_i^c, \;\; i=1,\dots,M, \; c=1,\dots,C \;\; \mbox{ --- demand constraint};
\label{miso_demand_constrint} \\
& & \sum_{i=1}^M\sum_{c=1}^C f_{i1}^c \leq R \;\; \mbox{ --- supply constraint};
\label{miso_supply_constraint} \\
& & \frac{f_{i1}^c}{\sum_{c'=1}^C f_{i1}^{c'}} =
\frac{S_{i}^c}{\sum_{c'=1}^C S_{i}^{c'}}, \;\; i = 1,\dots,M,\; c=1,\dots,C,
\;\; \mbox{ --- proportionality constraint} \nonumber\\
& & \mbox{for commodity flows};
\label{miso_proportionality_constraint}\\
& & \left.\begin{array}{cl}
\mbox{(a)} &
p_{i''}\sum_{c=1}^C f_{i'1}^c = p_{i'}\sum_{c=1}^C f_{i''1}^c \;\;
\forall i',i'',c, \mbox{ such that }
f_{i'1}^c<S_{i'}^c, \;
f_{i''1}^c<S_{i''}^c, \\
\mbox{(b)} & 
\mbox{If } \sum_{c=1}^C f_{i1}^c < \sum_{c=1}^C S_i^c, \mbox{ then }
\sum_{c=1}^C f_{i1}^c   \geq
\frac{p_i}{\sum_{i'=1}^M p_{i'}} R. 
\end{array}\right\} \;\; \mbox{ ---} \nonumber\\
& & \mbox{priority constraint}.
\label{miso_priority_constraint}
\end{eqnarray}
Generally,
constraint~\eqref{miso_proportionality_constraint} is optional.
In this paper, however, we assume
that all constraints on input-output flows are imposed
on commodity flows proportionally to commodity contributions to
total demand.
This constraint can be interpreted as a FIFO rule for
multi-commodity traffic within a given link;
not to be confused with the FIFO rule for multiple output links.

Let us discuss priority constraint~\eqref{miso_priority_constraint}
in more detail.
It should be interpreted as follows.
Input links $i=1,\dots,M$, fall into two categories:
(1) those whose flow is restricted by the allocated supply of
the output link; and
(2) those whose demand is satisfied by the supply allocated for them
in the output link.
Priorities define how supply in the output link is allocated for input flows.
Condition~\eqref{miso_priority_constraint}(a)
says that flows from input links of category 1 are allocated
proportionally to their priorities.
Condition~\eqref{miso_priority_constraint}(b)
ensures that input
flows of category 2 do not take up more output supply than was
allocated for them in cases where category 1 is non-empty.
The inequality
in Condition~\eqref{miso_priority_constraint}(b) becomes an
equality when category 2 is empty.

There is a special case, when some input link priorities are
equal to $0$.
If there exists input link $\hat{\imath}$, such that
$p_{\hat{\imath}}=0$, while $f_{\hat{\imath}1}>0$, then,
due to condition~\eqref{miso_priority_constraint}(a),
all input links with non-zero priorities are in category 2.
Thus, if category 1 contains only input links with zero priorities,
one should evaluate condition~\eqref{miso_priority_constraint}(a)
with arbitrary positive, but \emph{equal}, priorities: $p_{i'}=p_{i''}>0$.

If the priorities $p_i$ are proportional to send functions
$\sum_{c=1}^C S_i$, $i=1,\dots,M$,\footnote{Here we assume that $\sum_{c=1}^C S_i^c >0$, $i=1,\dots,M$.} 
then condition~\eqref{miso_priority_constraint} can be written as an equality
constraint:
\be
\frac{\sum_{c=1}^C f_{11}^c}{\sum_{c=1}^C S_1^c} = \dots =
\frac{\sum_{c=1}^C f_{i1}^c}{\sum_{c=1}^C S_i^c} = \dots =
\frac{\sum_{c=1}^C f_{M1}^c}{\sum_{c=1}^C S_M^c},
\label{miso_demand_proportional}
\ee
and the optimization
problem~\eqref{miso_objective}-\eqref{miso_priority_constraint}
turns into a \emph{linear program} (LP).

For arbitrary priorities 
with $M=2$, condition~\eqref{miso_priority_constraint} becomes:
\begin{eqnarray}
& & \sum_{c =1}^C f_{11}^c \leq
\max\left\{\frac{p_1}{p_1+p_2}R,\; R-\sum_{c=1}^C S_2^c\right\};
\label{miso_2in_priority_constraint_1} \\
& & \sum_{c =1}^C f_{21}^c \leq
\max\left\{\frac{p_2}{p_1+p_2}R,\; R-\sum_{c=1}^C S_1^c\right\}.
\label{miso_2in_priority_constraint_2}
\end{eqnarray}
To give a hint how more complicated
constraint~\eqref{miso_priority_constraint} becomes
as $M$ increases,
let us write it out for $M=3$:
\be
\sum_{c=1}^C f_{11}^c \leq
\max\left\{
\frac{p_1}{\sum_{i=1}^3 p_i}R,
\frac{p_1}{p_1+p_2}\left(R-\sum_{c=1}^C S_3^c\right), 
\frac{p_1}{p_1+p_3}\left(R-\sum_{c=1}^C S_2^c\right), 
R-\sum_{i=2,3}\sum_{c=1}^C S_i^c
\right\}; \label{miso_3in_priority_constraint_1} \\
\ee
\be
\sum_{c=1}^C f_{21}^c \leq
\max\left\{
\frac{p_2}{\sum_{i=1}^3 p_i}R,
\frac{p_2}{p_1+p_2}\left(R-\sum_{c=1}^C S_3^c\right), 
\frac{p_2}{p_2+p_3}\left(R-\sum_{c=1}^C S_1^c\right), 
R-\sum_{i=1,3}\sum_{c=1}^C S_i^c
\right\}; \label{miso_3in_priority_constraint_2}
\ee
\be
\sum_{c=1}^C f_{31}^c \leq
\max\left\{
\frac{p_3}{\sum_{i=1}^3 p_i}R,
\frac{p_3}{p_1+p_3}\left(R-\sum_{c=1}^C S_2^c\right), 
\frac{p_3}{p_2+p_3}\left(R-\sum_{c=1}^C S_1^c\right), 
R-\sum_{i=1,2}\sum_{c=1}^C S_i^c
\right\}.\label{miso_3in_priority_constraint_3}
\ee
As we can see, right hand sides of 
inequalities~\eqref{miso_3in_priority_constraint_1}-\eqref{miso_3in_priority_constraint_3}
contain known quantities, and so for arbitrary priorities,
problem~\eqref{miso_objective}-\eqref{miso_priority_constraint} is also an LP.
For general $M$, however, building constraint~\eqref{miso_priority_constraint}
requires a somewhat involved algorithm.
Instead, we present the algorithm for computing input-output flows
$f_{i1}^c$ that solves the maximization
problem~\eqref{miso_objective}-\eqref{miso_priority_constraint}.
\begin{enumerate}
\item Initialize:
\begin{eqnarray*}
\Rt(0) & := & R; \\
U(0) & := & \left\{1,\dots,M\right\}; \\
k & := & 0.
\end{eqnarray*}
$U(k)$ is the set of unprocessed input links:
input links whose input-output flows have not been assigned yet.

\item Check that at least one of the unprocessed input links has
nonzero priority, otherwise, assign equal positive priorities
to all the unprocessed input links:
\[
\pt_i(k) = \left\{\begin{array}{ll}
p_i, &
\mbox{ if there exists } i'\in U(k):\; p_{i'} > 0,\\
\frac{1}{|U(k)|}, &
\mbox{ otherwise},
\end{array}\right.
\]
where $|U(k)|$ denotes the number of elements in set $U(k)$.

\item Define the set of input links
that want to send fewer vehicles than their allocated supply
and whose flows are still undetermined:
\[
\Ut(k) = \left\{i\in U(k):\; \sum_{c=1}^C S_i^c\leq\pt_i(k)
\frac{\Rt(k)}{\sum_{i'\in U(k)}\pt_{i'}(k)} \right\}.
\]
\begin{itemize}
\item If $\Ut(k) \neq\emptyset$, assign:
\begin{eqnarray*}
f_{i1}^c & = & S_i^c, \;\;\; i\in\Ut(k); \\
\Rt(k+1) & = & \Rt(k) - \sum_{i\in\Ut}\sum_{c=1}^C f_{i1}^c; \\
U(k+1) & = & U(k) \setminus \Ut(k).
\end{eqnarray*}
\item Else, assign:
\begin{eqnarray*}
f_{i1}^c & = & S_i^c \frac{\pt_i(k)}{\sum_{i'\in U(k)}\pt_{i'}(k)}
\frac{\Rt(k)}{\sum_{c=1}^C S_i^c},
\;\;\; i\in U(k); \\
U(k+1) & = & \emptyset .
\end{eqnarray*}
\end{itemize}

\item If $U(k+1)=\emptyset$, then stop.

\item Set $k:=k+1$, and return to step 2.
\end{enumerate}

This algorithm finishes after no more than $M$ iterations, and
in the special case of $M=2$ it reduces to
\begin{eqnarray}
f_{11}^c & = & \min\left\{S_1^c, \;\; S_1^c\frac{\max\left\{
\frac{p_1}{p_1+p_2}R, \;\; R-\sum_{c'=1}^CS_2^{c'}\right\}}{
\sum_{c'=1}^CS_1^{c'}}\right\}; \label{eq_merge_21_1}\\
f_{21}^c & = & \min\left\{S_2^c, \;\; S_2^c\frac{\max\left\{
\frac{p_2}{p_1+p_2}R, \;\; R-\sum_{c'=1}^CS_1^{c'}\right\}}{
\sum_{c'=1}^CS_2^{c'}}\right\}. \label{eq_merge_21_2}
\end{eqnarray}

The following theorem can be trivially proved,
and we leave the proof to the reader.
Further, in Section~\ref{subsec_mimo_relaxed}, it will become clear
that this theorem is a special case of a more general statement.
\begin{theo}
The input-output flow computation algorithm constructs the
unique solution of the maximization
problem~\eqref{miso_objective}-\eqref{miso_priority_constraint}.
\label{theo_miso_optimal}
\end{theo}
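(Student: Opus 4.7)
The plan is to verify that the algorithm's output is feasible for (\ref{miso_objective})--(\ref{miso_priority_constraint}), attains the objective's maximum, and is the unique such assignment. I would proceed by induction on the iteration counter $k$, tracking the invariant that the ratio $\Rt(k)/\sum_{i\in U(k)}\pt_i(k)$ is nondecreasing in $k$. Non-negativity (\ref{miso_nonnegativity_constraint}) and within-link commodity proportionality (\ref{miso_proportionality_constraint}) are immediate from the closed forms assigned in either branch. The demand constraint (\ref{miso_demand_constrint}) is automatic in the ``if'' branch ($f=S$) and holds in the ``else'' branch because $\Ut(k)=\emptyset$ forces $\pt_i(k)\Rt(k)/\sum_{i'\in U(k)}\pt_{i'}(k)<\sum_c S_i^c$ for every remaining $i$, making the scaling factor strictly less than one. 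The supply constraint (\ref{miso_supply_constraint}) follows by telescoping $\Rt(k+1)=\Rt(k)-\sum_{i\in\Ut(k)}\sum_c f_{i1}^c$ and noting that the terminal ``else'' exactly exhausts $\Rt(K)$.

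The key step is the priority constraint (\ref{miso_priority_constraint}), which is where I expect most of the work. I would first prove the monotonicity invariant: any $j\in\Ut(k)$ satisfies $\sum_c S_j^c\le \pt_j(k)\Rt(k)/\sum_{i\in U(k)}\pt_i(k)$ by definition of $\Ut(k)$, so the supply removed in iteration $k$ is bounded by the corresponding fraction of priority mass removed, yielding $\Rt(k+1)/\sum_{i\in U(k+1)}\pt_i(k+1)\ge \Rt(k)/\sum_{i\in U(k)}\pt_i(k)$. Starting from $\Rt(0)/\sum \pt_i(0)=R/\sum_{i'=1}^M p_{i'}$, the invariant gives condition (\ref{miso_priority_constraint})(b): every ``restricted'' link $i$ is processed in the terminal ``else'', so $\sum_c f_{i1}^c=\pt_i(K)\Rt(K)/\sum_{i'\in U(K)}\pt_{i'}(K)\ge p_iR/\sum_{i'=1}^M p_{i'}$. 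Condition (\ref{miso_priority_constraint})(a) holds because in the terminal iteration all unsatisfied flows are proportional to $\pt_i$, which equals $p_i$ unless all remaining priorities vanish --- the case in which the algorithm substitutes equal positive values, exactly as prescribed by the remark following (\ref{miso_priority_constraint}).

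For optimality, observe that the maximum of (\ref{miso_objective}) under only demand and supply is $\min(R,\sum_i\sum_c S_i^c)$; the algorithm attains this bound, since termination via an emptying ``if'' branch corresponds to $\sum_i\sum_c S_i^c\le R$ with all demands met, while termination via the terminal ``else'' uses all of $\Rt(K)$ so total flow equals $R$. For uniqueness, any other optimal feasible $\{g_{i1}^c\}$ must meet every demand for links the algorithm assigns in ``if'' branches (otherwise reducing those restrictions would free supply that, by the priority invariant, could not be absorbed elsewhere without violating (\ref{miso_priority_constraint})(b)); on the remaining ``category-1'' subset, (\ref{miso_priority_constraint})(a) fixes all pairwise ratios, exhaustion of residual supply fixes the magnitudes, and (\ref{miso_proportionality_constraint}) fixes the commodity split. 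The main obstacle I anticipate is the priority verification in paragraph two, especially making the monotonicity invariant and the zero-priority fallback watertight; once those are in place, optimality and uniqueness reduce to standard LP-style bookkeeping.
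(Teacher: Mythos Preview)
The paper does not give a self-contained proof of Theorem~\ref{theo_miso_optimal}; it declares the result ``trivially proved,'' leaves it to the reader, and later recovers it as a corollary of the general Theorem~\ref{theo_mimo_relaxed_optimal} via Lemma~\ref{lemma_mimo} and the full-FIFO specialization. Your route is therefore genuinely different: a direct MISO argument. The feasibility and optimality portions of your outline are correct, and the monotonicity of $\Rt(k)/\sum_{i\in U(k)}\pt_i(k)$ is exactly the right invariant---it is not isolated anywhere in the paper and it delivers both~\eqref{miso_priority_constraint}(b) and the identity $\sum_{i,c}f_{i1}^c=\min\bigl\{R,\sum_{i,c}S_i^c\bigr\}$ cleanly.

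There is, however, a real gap in your uniqueness step, and it traces back to the paper's formulation of~\eqref{miso_priority_constraint} rather than to your reasoning. Take $M=3$, $C=1$, $p_1=p_2=p_3=1/3$, $R=10$, $(S_1,S_2,S_3)=(1,5,10)$. The algorithm returns $(1,4.5,4.5)$, but $(g_1,g_2,g_3)=(1,5,4)$ is also feasible for~\eqref{miso_nonnegativity_constraint}--\eqref{miso_priority_constraint} \emph{as literally written} and has the same objective value $10$: only link~$3$ is restricted under $g$, so~\eqref{miso_priority_constraint}(a) is vacuous and~\eqref{miso_priority_constraint}(b) merely requires $g_3\geq 10/3$. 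Your argument implicitly assumes that the restricted set under any optimal $g$ coincides with the algorithm's ``else''-branch set, and this example shows~\eqref{miso_priority_constraint} alone does not force that. The explicit $M=3$ inequalities~\eqref{miso_3in_priority_constraint_1}--\eqref{miso_3in_priority_constraint_3} \emph{do} exclude $g$ (they give $f_2\leq 4.5$), and so does the general MIMO priority constraint~\eqref{mimo_priority_constraint}(a), which for $N=1$ additionally demands $f_i/p_i\geq f_{i'}/p_{i'}$ for every restricted $i$ and \emph{every} other $i'$, not merely restricted $i'$. To salvage uniqueness you must adopt this stronger reading of the priority constraint; under it, the category partition is forced (any unrestricted link must satisfy $S_{i'}/p_{i'}\leq \lambda$ where $\lambda$ is the common ratio on the restricted set), and your ``pairwise ratios plus exhaustion of residual supply'' step then goes through as a standard waterfilling argument.
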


\textbf{Example.} Suppose we have three inputs and one output, with parameters:

\begin{tabular}{l l}
 $C=1$ & $R = 1000$ \\
 $S_1 = 400$ & $p_1 = \sfrac{1}{3}$ \\
 $S_2 = 500$ & $p_2 = \sfrac{2}{3}$ \\
 $S_3 = 200$ & $p_3 = 0$
\end{tabular}

Our solution algorithm for finding the resulting flows proceeds
as follows:
\begin{align*}
\underline{k=0:} & \\
& 1. \qquad U(0) = \{1,2,3\}; \quad \Rt(0) = 1000; \\
& 2. \qquad \pt_1(0) = \sfrac{1}{3}, \; \pt_2(0) = \sfrac{2}{3}, \; \pt_3(0) = 0 \\
& 3. \qquad \Ut(0) = \{2\}; \quad \boldsymbol{f_2 = 500}; \quad \Rt(1) = 500; \quad U(1) = \{1,3\}  \\
\underline{k=1:} & \\
& 2. \qquad \pt_1(1) = \sfrac{1}{3}, \; \pt_3(1) = 0 \\
& 3. \qquad \Ut(1) = \{1\}; \quad \boldsymbol{f_1 = 400}; \quad \Rt(2) = 100; \quad U(2) = \{3\} \\
\underline{k=2:} & \\
& 2. \qquad \pt_3(2) = 1 \\
& 3. \qquad \Ut(2) = \emptyset; \quad \boldsymbol{f_3 = 100}; \quad U(3) = \emptyset
\end{align*}

\subsection{Multiple-Input-Multiple-Output (MIMO) Node}\label{subsec_mimo}
Now we consider the general case: a node with $M$ input, $N$ output links
and $C$ traffic commodities.
Given are input demands $S_i^c$ and split ratios $\beta_{ij}^c$ specified
per commodity, input priorities $p_i\geq 0$,
and output supply $R_j$.

Define \emph{oriented demand}:
\be
S_{ij}^c = \beta_{ij}^c S_i^c,
\label{eq_oriented_demand}
\ee
and \emph{oriented priorities}:
\be
p_{ij} = p_i \frac{\sum_{c=1}^C S_{ij}^c}{\sum_{c=1}^C S_i^c},
\label{eq_oriented_priorities0}
\ee
where $i=1,\dots,M$, $j=1,\dots,N$ and $c=1,\dots,C$.

As before, we start by casting the allocation of input-output flows
$f_{ij}^c$ as a constrained optimization problem:
\be
\max\left(\sum_{i=1}^M\sum_{j=1}^N\sum_{c=1}^C f_{ij}^c\right),
\label{mimo_objective}
\ee
subject to:
\begin{eqnarray}
& & f_{ij}^c \geq 0, \;\; i=1,\dots,M, \; j=1,\dots,N, \; c=1,\dots,C \;
\mbox{ --- non-negativity constraint};
\label{mimo_nonnegativity_constraint} \\
& & f_{ij}^c \leq S_{ij}^c, \;\; i=1,\dots,M, \; j=1,\dots,N, \; 
c=1,\dots,C \; \mbox{ --- demand constraint};
\label{mimo_demand_constraint} \\
& & \sum_{i=1}^M\sum_{c=1}^C f_{ij}^c \leq R_j, \;\; j=1,\dots,N \;
\mbox{ --- supply constraint};
\label{mimo_supply_constraint} \\
& & \frac{f_{ij}^c}{\sum_{c'=1}^C f_{ij}^{c'}} =
\frac{S_{ij}^c}{\sum_{c'=1}^C S_{ij}^{c'}}, \;\; i = 1,\dots,M,\;
j=1,\dots,N, c=1,\dots,C \;\; \mbox{ --- proportionality}\nonumber \\
& & \mbox{constraint for commodity flows};
\label{mimo_proportionality_constraint}\\
& & \frac{\sum_{c=1}^C f_{ij}^c}{\sum_{j=1}^N\sum_{c=1}^C f_{ij}^c} =
\frac{\sum_{c=1}^C S_{ij}^c}{
\sum_{j=1}^N\sum_{c=1}^C S_{ij}^c},
\;\; i=1,\dots,M \;\;  \mbox{ --- FIFO constraint};
\label{mimo_fifo_constraint} 
\end{eqnarray}
\be
\left.\begin{array}{cl}
\mbox{(a)} &
\mbox{For each input link $i$, such that } \\
& \sum_{j=1}^N \sum_{c=1}^C f_{ij}^c < \sum_{c=1}^C S_i^c, \\
& \mbox{there exists an output link $j^\ast$, such that } \\
& p_{i'j^\ast}\sum_{c=1}^C f_{ij^\ast}^c \geq 
p_{ij^\ast}\sum_{c=1}^C f_{i'j^\ast}^c
\forall i'\neq i, \\
& \mbox{and it is said that output $j^\ast$ restricts input flow $i$};\\
\mbox{(b)} &
\mbox{For each input link $i$ whose flow is restricted by some output $j$,} \\
& \sum_{c=1}^C f_{ij}^c \geq \frac{p_{ij}}{\sum_{i'=1}^M p_{i'j}} R_j.
\end{array}\right\} \mbox{ --- priority constraint}.
\label{mimo_priority_constraint}
\ee
Here, the objective function~\eqref{mimo_objective} and
constraints~\eqref{mimo_nonnegativity_constraint}-\eqref{mimo_proportionality_constraint}
are straight forward extensions of the objective
function~\eqref{mimo_objective} and
constraints~\eqref{miso_nonnegativity_constraint}-\eqref{miso_proportionality_constraint}.
The FIFO constraint~\eqref{mimo_fifo_constraint} ensures that, for
a given input link $i$, if
any one of the output links is unable to accommodate its allocation
of flow, all outflow from $i$ is restricted
proportionally~\citep{daganzo94, daganzo95a}.

Let us focus the attention on the priority
constraint~\eqref{mimo_priority_constraint}.
Condition~\eqref{mimo_priority_constraint}(a) says that if the
flow from a given input link $i$ is reduced by the output supply,
there must be the most restrictive output link, which is denoted $j^\ast$.
If the output $j^\ast$ restricts flows from input links $i'$ and $i''$,
then, according to condition~\eqref{mimo_priority_constraint}(a),
\[
p_{i''j^\ast}\sum_{c=1}^C f_{i'j^\ast}^c = 
p_{i'j^\ast}\sum_{c=1}^C f_{i''j^\ast}^c.
\]
For each output link $j$, $j=1,\dots,N$, 
flows $\sum_{c=1}^C f_{ij}^c$ fall into two categories:
(1) restricted by \emph{this} output link; and
(2) not restricted by \emph{this} output link.
Condition~\eqref{mimo_priority_constraint}(a) states that input-output
flows of category 1 are allocated proportionally to their priorities.
For every output link $j$, $j=1,\dots,N$, 
condition~\eqref{mimo_priority_constraint}(b) ensures that
input-output flows of category 2 
do not take up more output supply than was allocated for them when category 1
is non-empty.

Recalling the MISO case, we can see that for $N=1$,
formulation~\eqref{mimo_objective}-\eqref{mimo_priority_constraint}
translates to 
formulation~\eqref{miso_objective}-\eqref{miso_priority_constraint}.

{\bf Remark.}
It must be noted here that when
priorities $p_i$ are proportional to send functions
$\sum_{c=1}^C S_i$, $i=1,\dots,M$, and $M>1$, the 
condition~\eqref{mimo_priority_constraint}
\emph{cannot} be written simply as an extension
of~\eqref{miso_demand_proportional} for any output link $j$,
$j=1,\dots,N$, as was done by Bliemer in~\citeyearpar{bliemer07}. 
He used the constraint
\be
\frac{\sum_{c=1}^C f_{1j}^c}{\sum_{c=1}^C S_{1j}^c} = \dots =
\frac{\sum_{c=1}^C f_{ij}^c}{\sum_{c=1}^C S_{ij}^c} = \dots =
\frac{\sum_{c=1}^C f_{Mj}^c}{\sum_{c=1}^C S_{Mj}^c}, \;\;
j=1,\dots,N,
\label{mimo_demand_proportional}
\ee
in the place of~\eqref{mimo_priority_constraint},
which reduces the optimization
problem~\eqref{mimo_objective}-\eqref{mimo_priority_constraint} to an LP.
While suitable for $N=1$,
expression~\eqref{mimo_demand_proportional}
is not an adequate representation of the
constraint~\eqref{mimo_priority_constraint} for $N>1$, 
because it leads to the underutilization of the available supply,
as will be shown in Section~\ref{subsec_bliemer_tampere}.

Next, we present the algorithm for constructing
the unique solution of the optimization
problem~\eqref{mimo_objective}-\eqref{mimo_priority_constraint}.
\begin{enumerate}
\item Initialize:
\begin{eqnarray*}
\Rt_j(0) & := & R_j; \\
U_j(0) & := & \left\{i\in\{1,\dots,M\}: \;
\sum_{c=1}^C S_{ij}^c > 0 \right\}; \\
k & := & 0; \\
& & i = 1,\dots,M, \;\;\; j = 1,\dots,N, \;\;\; c = 1,\dots,C.
\end{eqnarray*}
$U_j(k)$ is the set of input links contributing traffic to output link $j$,
whose input-output flows are still unassigned.

\item Define the set of output links that still need processing:
\[ V(k) = \left\{j\in\{1,\dots,N\}: \; U_j(k) \neq \emptyset\right\}. \]
If $V(k) = \emptyset$, stop.

\item Check that at least one of the unprocessed input links has
nonzero priority, otherwise, assign equal positive priorities
to all the unprocessed input links:
\be
\pt_i(k) = \left\{\begin{array}{ll}
p_i, &
\mbox{ if there exists } i' \in \bigcup_{j\in V(k)} U_j(k):\;
p_{i'} > 0, \\
\frac{1}{\left|\bigcup_{j\in V(k)}U_j(k)\right|}, &
\mbox{ otherwise},
\end{array}\right.
\label{eq_nonzero_priorities}
\ee
where $\left|\bigcup_{j\in V(k)} U_j(k)\right|$ denotes the number
of elements in the union $\bigcup_{j\in V(k)} U_j(k)$;
and for each output link $j\in V(k)$ and input link $i\in U_j(k)$
compute oriented priority:
\be
\pt_{ij}(k) = \pt_i(k) \frac{\sum_{c=1}^C S_{ij}^c}{\sum_{c=1}^C S_i^c}.
\label{eq_oriented_priorities}
\ee

\item For each $j\in V(k)$, compute factors:
\be
a_j(k) = \frac{\Rt_j(k)}{\sum_{i\in U_j(k)}\pt_{ij}(k)},
\label{eq_aj_factors}
\ee
and find the smallest of these factors:
\be
a_{j^\ast}(k) = \min_{j\in V(k)} a_j(k).
\label{eq_aj_min}
\ee
The link $j^\ast$ has the most restricted supply of
all output links.

\item Define the set of input links,
whose demand does not exceed the allocated supply:
\[
\Ut(k) = \left\{i\in U_{j^\ast}(k):\;
\sum_{c=1}^C S_i^c \leq \pt_i(k)a_{j^\ast}(k)\right\}.
\]
\begin{itemize}
\item If $\Ut(k) \neq \emptyset$, then for all output links $j\in V(k)$ assign:
\begin{eqnarray*}
f_{ij}^c & = & S_{ij}^c, \;\;\; i\in\Ut(k),\;\; c=1,\dots,C;\\
\Rt_j(k+1) & = & \Rt_j(k) - \sum_{i\in\Ut(k)}\sum_{c=1}^C f_{ij}^c; \\
U_j(k+1) & = & U_j(k) \setminus \Ut(k). 
\end{eqnarray*}
\item Else, for all input links $i\in U_{j^\ast}(k)$
and output links $j\in V(k)$ assign:
\begin{eqnarray}
f_{ij}^c & = &
S_{ij}^c \frac{\pt_{ij}(k)a_{j^\ast}(k)}{\sum_{c'=1}^C S_{ij}^{c'}},
\;\;\; i\in U_j(k) \cap U_{j^\ast}(k); \label{eq_mimo_step6}\\
\Rt_j(k+1) & = & \Rt_j(k) - \sum_{i\in U_{j^\ast}(k)}\sum_{c=1}^C f_{ij}^c;
\nonumber\\
U_j(k+1) & = & U_j(k) \setminus U_{j^\ast}(k), \nonumber
\end{eqnarray}
where, obviously, $U_{j^\ast}(k+1) = \emptyset$.
\end{itemize}

\item Set $k:=k+1$, and return to step 2.
\end{enumerate}

This algorithm takes no more than $M$ iterations to complete.

The following lemma states that in the case of $N=1$,
the MIMO algorithm produces exactly the same result as the MISO
algorithm described in Section \ref{subsec_miso}.
\begin{lemma}
The MISO algorithm is a special case of the MIMO algorithm with $N=1$.
\label{lemma_mimo}
\end{lemma}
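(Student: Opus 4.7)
The plan is to verify term-by-term that the MIMO algorithm, specialized to $N=1$, executes exactly the same updates as the MISO algorithm, yielding identical input-output flows $f_{i1}^c$. First I would note that with only one output link, conservation of split ratios forces $\beta_{i1}^c = 1$ for every commodity $c$ and every input link $i$ carrying positive demand. Substituting this into the definitions \eqref{eq_oriented_demand} and \eqref{eq_oriented_priorities0} gives $S_{i1}^c = S_i^c$ and $p_{i1} = p_i$, and the same identity carries over to the tilded versions via \eqref{eq_oriented_priorities}, so $\tilde{p}_{i1}(k) = \tilde{p}_i(k)$ at every iteration.

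Next I would walk the two algorithms side by side. At $k=0$, the MIMO set $U_1(0) = \{i : \sum_c S_i^c > 0\}$ differs from the MISO set $\{1,\dots,M\}$ only by input links with zero total demand, and such links receive zero flow in both algorithms (MISO would place them in $\tilde{U}(0)$ immediately and assign $f_{i1}^c = S_i^c = 0$), so the final flow vectors agree. Since $N=1$, the outer set $V(k)$ equals $\{1\}$ precisely when $U_1(k) \neq \emptyset$, so the MIMO termination in step 2 coincides with the MISO termination in step 4.

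Then I would verify the factor computation: with only one output link, $j^\ast = 1$ at every iteration and
\[
a_{j^\ast}(k) = a_1(k) = \frac{\tilde{R}_1(k)}{\sum_{i \in U_1(k)} \tilde{p}_{i1}(k)} = \frac{\tilde{R}(k)}{\sum_{i \in U(k)} \tilde{p}_i(k)}.
\]
Therefore the MIMO test $\sum_c S_i^c \leq \tilde{p}_i(k)\, a_{j^\ast}(k)$ defining $\tilde{U}(k)$ collapses to the MISO test defining $\tilde{U}(k)$, the ``if'' branches assign the identical flows $f_{i1}^c = S_i^c$ and update $\tilde{R}$ and $U$ identically, and the ``else'' branch, upon substituting $\tilde{p}_{ij}(k) = \tilde{p}_i(k)$ and $S_{ij}^c = S_i^c$ into \eqref{eq_mimo_step6}, reduces exactly to the MISO ``else'' formula
\[
f_{i1}^c = S_i^c \, \frac{\tilde{p}_i(k)}{\sum_{i'\in U(k)} \tilde{p}_{i'}(k)} \, \frac{\tilde{R}(k)}{\sum_{c'=1}^C S_i^{c'}}.
\]

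The main obstacle is purely bookkeeping: checking that the two algorithms' handling of boundary cases agrees. The zero-priority case is handled identically because both algorithms invoke the same fallback rule (assign uniform positive priorities to the unprocessed links), and the zero-demand case is resolved by the observation that any input link omitted from $U_1(0)$ contributes zero flow in both procedures. Once those edge cases are checked, the remainder of the proof is a mechanical identification of terms between the two routines.
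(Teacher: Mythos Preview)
Your proposal is correct and follows essentially the same approach as the paper's own proof. The paper's argument is a one-line observation that for $N=1$ one necessarily has $j^\ast=1$ and $a_{j^\ast}(k)=a_1(k)=\tilde{R}_1(k)/\sum_{i\in U_1(k)}\tilde{p}_i(k)$; you have simply fleshed out the remaining bookkeeping (the $\beta_{i1}^c=1$ identification, the zero-demand and zero-priority edge cases, and the match-up of both branches of step~5), all of which the paper leaves implicit.
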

\begin{proof}
The proof follows from the fact that for $N=1$,
formulae (\ref{eq_aj_factors})-(\ref{eq_aj_min}) result in
$a_{j^\ast}(k) = a_1(k)=\frac{\Rt_{1}(k)}{\sum_{i\in U_1(k)}\pt_i(k)}$ and $j^\ast = 1$.
\end{proof}

The following theorem is the analog of Theorem~\ref{theo_miso_optimal}.
As in the MISO case, we leave the proof out, since a more general
statement will be proved in Section~\ref{subsec_mimo_relaxed}.
\begin{theo}
The input-output flow computation algorithm constructs the
unique solution of the maximization
problem~\eqref{mimo_objective}-\eqref{mimo_priority_constraint}.
\label{theo_mimo_optimal}
\end{theo}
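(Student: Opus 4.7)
The plan is to prove feasibility and then optimality/uniqueness, both with induction on the iteration index $k$. Since the algorithm terminates in at most $M$ iterations and assigns every $f_{ij}^c$ when it stops, it produces a candidate solution; the task is to show it lies in the feasible set of \eqref{mimo_nonnegativity_constraint}--\eqref{mimo_priority_constraint} and is the unique maximizer of \eqref{mimo_objective}.

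For feasibility I would verify the constraints in order. Non-negativity is immediate from both update rules. The demand constraint \eqref{mimo_demand_constraint} holds in the ``$\Ut(k)\neq\emptyset$'' branch by definition, and in the ``else'' branch because $\Ut(k)=\emptyset$ means $\sum_c S_i^c>\pt_i(k)a_{j^\ast}(k)$ for every $i\in U_{j^\ast}(k)$, which combined with the oriented-priority identity \eqref{eq_oriented_priorities} makes the per-commodity scaling factor in \eqref{eq_mimo_step6} strictly less than one. The commodity proportionality constraint \eqref{mimo_proportionality_constraint} is built into both update rules. The FIFO constraint \eqref{mimo_fifo_constraint} follows because the ``else'' branch removes an entire input $i$ from every $U_j(\cdot)$ at once, using a single per-input scaling $\pt_i(k)a_{j^\ast}(k)/\sum_{c'}S_i^{c'}$ that is independent of $j$, while the ``if'' branch uses the full oriented demand. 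The supply bound \eqref{mimo_supply_constraint} is maintained by the invariant $\sum_{i\notin U_j(k)}\sum_c f_{ij}^c=R_j-\Rt_j(k)\geq 0$, which the $\Rt_j$ update preserves and which reduces $\Rt_{j^\ast}$ to zero exactly at the iteration that processes $j^\ast$ in the ``else'' branch.

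The delicate part is the priority constraint \eqref{mimo_priority_constraint}. The key auxiliary fact is that $a_j(k)$ is non-decreasing in $k$ for each $j$ that remains in $V$: every input $i$ removed at iteration $k$ contributes a supply-to-priority ratio $\sum_{c'}S_i^{c'}/\pt_i(k)\leq a_{j^\ast}(k)\leq a_j(k)$, so removing it cannot decrease the ratio on any surviving output. Given this monotonicity, for any supply-restricted input $i$ --- necessarily removed in an ``else'' branch at some iteration $k(i)$ --- I would take $j^\ast=j^\ast(k(i))$ as the restricting output required by \eqref{mimo_priority_constraint}(a); the inequality there then splits into a case for competitors removed earlier (smaller $a_{j^\ast}$) and one for competitors removed at the same or a later iteration, and \eqref{mimo_priority_constraint}(b) is a direct rewriting of the assignment \eqref{eq_mimo_step6}.

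For optimality and uniqueness I would use an inductive exchange argument. Suppose $\{\bar f_{ij}^c\}$ is any feasible maximizer and assume by induction that it agrees with the algorithm on all inputs removed before iteration $k$. At iteration $k$, the greedy choice of $j^\ast$ in \eqref{eq_aj_min} identifies the output whose remaining supply per oriented priority is smallest. Using FIFO, the priority constraint at $j^\ast$, and maximality of the objective, one shows that $\bar f_{ij^\ast}^c$ for $i\in U_{j^\ast}(k)$ can be neither strictly less than the algorithm's value (otherwise some competitor would violate \eqref{mimo_priority_constraint}(b) or we could add flow at $j^\ast$ without violating any constraint) nor strictly greater (by demand for $i\in\Ut(k)$ and by supply together with proportional sharing for the rest). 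FIFO then propagates the agreement to every output of each removed input, closing the induction. The hardest step will be this exchange at $j^\ast$: the disjunctive structure of \eqref{mimo_priority_constraint} requires simultaneously ruling out deviations on multiple inputs while keeping the objective at its maximum, and the zero-priority corner case handled by \eqref{eq_nonzero_priorities} must be treated separately when every member of $U_{j^\ast}(k)$ has $p_i=0$.
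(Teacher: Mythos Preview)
Your proposal is sound and, in fact, more detailed than what the paper does for this theorem. The paper does not prove Theorem~\ref{theo_mimo_optimal} directly at all: it explicitly defers the argument to Section~\ref{subsec_mimo_relaxed}, proves the relaxed-FIFO generalization (Theorem~\ref{theo_mimo_relaxed_optimal}), and then recovers Theorem~\ref{theo_mimo_optimal} as the special case $\eta_{jj'}^i\equiv 1$ via a one-line corollary. The paper's proof of the general theorem proceeds by showing that every assigned flow subset has some constraint active (demand, supply, or FIFO) and then argues uniqueness by showing that processing outputs in any order other than the $a_{j^\ast}$-minimizing one can violate the relaxed-FIFO constraint.

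Your route is genuinely different in two respects. First, you give a self-contained direct argument for the full-FIFO case rather than specializing the relaxed case; this is longer overall but avoids the machinery of running demands $\St_{ij}(k)$. Second, and more substantively, you isolate the monotonicity of $a_j(k)$ in $k$ as an explicit auxiliary lemma and use it both to verify the priority constraint~\eqref{mimo_priority_constraint} and to drive an inductive exchange argument against an arbitrary optimal $\{\bar f_{ij}^c\}$. The paper's proof never states this monotonicity, and its optimality step (``each flow has a tight constraint'') is really only a necessary condition on a non-convex feasible set; your exchange argument is the more standard way to close the gap between ``locally blocked'' and ``globally optimal and unique.'' Conversely, the paper's approach buys economy: one proof covers MISO, SIMO, MIMO, and relaxed-FIFO simultaneously, whereas your argument as written is tailored to full FIFO and would need the $\St_{ij}(k)$ bookkeeping to extend. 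The step you flag as hardest---ruling out simultaneous deviations on several inputs at $j^\ast$ under the disjunctive priority constraint, including the zero-priority branch of~\eqref{eq_nonzero_priorities}---is indeed where the work lies, and the paper's proof is no more explicit about it than your outline.
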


\subsection{Relationship with Tamp\'{e}re Et Al. and Bliemer Node Models}
\label{subsec_bliemer_tampere}

It can be shown that the node model described here
is a slight generalization of the Tamp\'{e}re et al. unsignalized
node model developed in~\citet{tampere11}, 
In particular, for a specific choice of priorities, $p_i$, 
the equations of section~\ref{subsec_mimo} become exactly the equations of this model. 

\begin{theo}
  By setting the priorities for the MIMO model
  equal to the link capacity, $F$,
  \begin{equation*}
    p_{i} = F_i,
  \end{equation*}
  with $F_i$ the capacity of link $i$, we obtain the Tamp\'{e}re et al. node model.
  \label{theo-tampere}
\end{theo}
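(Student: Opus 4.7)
The plan is to demonstrate equivalence at the level of the iterative allocation algorithm rather than comparing the two problem formulations verbatim, since the MIMO model is stated as a constrained optimization and the Tamp\`ere et al.\ model is typically given procedurally as an incremental capacity-proportional supply allocation.

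First, I would substitute $p_i = F_i$ into the definition of oriented priorities in~\eqref{eq_oriented_priorities0}, obtaining
\[
p_{ij} = F_i\,\frac{\sum_{c=1}^C S_{ij}^c}{\sum_{c=1}^C S_i^c},
\]
and then, following the algorithm of Section~\ref{subsec_mimo}, trace how the factors $a_j(k)$ from~\eqref{eq_aj_factors} become
\[
a_j(k) = \frac{\Rt_j(k)}{\sum_{i\in U_j(k)} F_i\,\sum_{c}S_{ij}^c / \sum_c S_i^c},
\]
so that the quantity $\pt_{ij}(k)\,a_{j^\ast}(k)$ assigned in~\eqref{eq_mimo_step6} becomes precisely the capacity-proportional share of $\Rt_{j^\ast}(k)$ that Tamp\`ere et al.\ prescribe as their SCIR. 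The selection of the most-restricted output $j^\ast$ via~\eqref{eq_aj_min} then matches Tamp\`ere's rule of identifying the output whose supply is exhausted first, and the partitioning of input links into $\Ut(k)$ (demand-satisfied) and $U_{j^\ast}(k)\setminus\Ut(k)$ (supply-restricted) corresponds exactly to their two cases.

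Second, I would verify that the remaining constraints coincide: the non-negativity, demand, supply, proportionality, and FIFO constraints~\eqref{mimo_nonnegativity_constraint}--\eqref{mimo_fifo_constraint} are shared between the two models by hypothesis (they are standard), so only the priority constraint~\eqref{mimo_priority_constraint} needs checking. Under $p_i = F_i$, condition~(a) reduces to the Tamp\`ere proportionality between competing restricted flows at a common bottleneck, and condition~(b) becomes the statement that unrestricted flows at $j$ get no less than their capacity-proportional share of $R_j$---both being the defining features of the Tamp\`ere et al.\ model. Finally, appealing to Theorem~\ref{theo_mimo_optimal} (uniqueness of the solution produced by the MIMO algorithm) identifies the two outputs.

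The main obstacle will be bridging the formalism gap: Tamp\`ere et al.\ describe their model through an iterative ``most restrictive constraint first'' procedure, while we have a constrained optimization. I expect the cleanest route is to avoid reformulating Tamp\`ere's optimization problem (if one exists in their paper) and instead argue that the step-by-step output of our algorithm, after the substitution $p_i = F_i$, is term-for-term identical to Tamp\`ere's iteration. A minor subtlety worth checking is the degenerate case where some $\sum_c S_i^c = 0$ or $F_i = 0$, which triggers the fallback in~\eqref{eq_nonzero_priorities}; one should note that Tamp\`ere et al.\ implicitly handle such inputs by excluding them from the competing set, so no discrepancy arises.
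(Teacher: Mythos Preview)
Your proposal is correct and follows essentially the same route as the paper: substitute $p_i=F_i$ into the oriented priorities, trace the algorithm's quantities $a_j(k)$ and $\pt_{ij}(k)a_{j^\ast}(k)$, and identify them with Tamp\`ere et al.'s capacity-proportional allocation. The paper's proof is in fact briefer than what you outline---it sets $C=1$, assumes for simplicity that no input link has its demand wholly met (so only the supply-restricted branch of step~5 is exercised), and then matches the resulting flow formula directly against the numbered equations in \citet{tampere11}; your plan to also cover the demand-satisfied branch $\Ut(k)$ and the degenerate-priority fallback is more complete but not strictly needed for the paper's purposes.
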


\begin{proof}
  The Tamp\'{e}re et al. model does not deal with multiple commodities, so first,
  let $C=1$ and drop the commodity index $c$. Then, our oriented
  priorities~\eqref{eq_oriented_priorities} become:
  \begin{equation*}
    \pt_{ij} = F_i \frac{S_{ij}}{S_i}.
  \end{equation*}

  Again for simplicity, assume that we do not have any input links whose demands can be 
  wholly met. Then, the smallest $a_j(k)$ factor is
  \begin{equation*}
    a_{j^\ast} = \min_{j \in V(k)} \frac{\tilde{R}_j}{ \sum_{i \in U_j(k)} F_{i} \frac{S_{ij}}{S_i}}
  \end{equation*}
  and the resulting flows are
  \begin{equation}
    f_{ij} = S_{ij} \frac{\pt_{ij}(k) a_{j^\ast}(k)}{S_{ij}} = \pt_{ij}(k) a_{j^\ast}(k) = F_i \frac{S_{ij}}{S_i} \frac{ \Rt_{j^\ast}}{ \sum_{i \in U_{j^\ast}} F_i \frac{S_{ij^\ast}}{S_i}}
  \end{equation}
  comparing the above development to equations (19), (24), (26), and (28) in
  \citet{tampere11}, we have recovered exactly the Tamp\'{e}re et al. node model.
\end{proof}

\begin{figure}
\centering
\includegraphics[width=1.5in]{./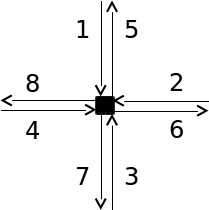}
\caption{A 4-by-4 example intersection.}
\label{fig-mimo_example}
\end{figure}

\textbf{Example.} This specific example is taken from Tamp\'{e}re et
al.~\citeyearpar{tampere11}, to illustrate both the procedure of our MIMO node
algorithm and the equivalence of our node model to the Tamp\'{e}re et al.
model proved above.

Consider the 4-by-4 intersection shown in Figure~\ref{fig-mimo_example}.
Suppose that our relevant parameters are:

\begin{tabular}{l l l l}
 $C=1$ & & & \\
 $S_1=500$ & $S_2=2000$ & $S_3=800$ & $S_4=1700$ \\
 $p_1=F_1=1000$ & $p_2=F_2=2000$ & $p_3=F_3=1000$ & $p_4=F_4=2000$ \\
 $R_5=1000$ & $R_6=2000$ & $R_7=1000$ & $R_8=2000$ \\
 $\beta_{15} = 0$ & $\beta_{16} = 0.1$ & $\beta_{17} = 0.3$ & $\beta_{18} = 0.6$ \\
 $\beta_{25} = 0.05$ & $\beta_{26} = 0$ & $\beta_{27} = 0.15$ & $\beta_{28} = 0.8$ \\
 $\beta_{35} = 0.125$ & $\beta_{36} = 0.125$ & $\beta_{37} = 0$ & $\beta_{38} = 0.75$ \\
 $\beta_{45} = \sfrac{1}{17}$ & $\beta_{46} = \sfrac{8}{17}$ & $\beta_{47} = \sfrac{8}{17}$ & $\beta_{48} = 0$
\end{tabular}

Our solution algorithm proceeds as follows:

First, compute oriented demands:

\begin{tabular}{l l l l}
 $S_{15}=0$ & $S_{16} = 50$ & $S_{17}=150$ & $S_{18} = 300$ \\
 $S_{25}=100$ & $S_{26} = 0$ & $S_{27}=300$ & $S_{28} = 1600$ \\
 $S_{35}=100$ & $S_{36} = 100$ & $S_{37}=0$ & $S_{38} = 600$ \\
 $S_{45}=100$ & $S_{46} = 800$ & $S_{47}=800$ & $S_{48} = 0.$
\end{tabular}

\underline{$k=0:$}
\begin{enumerate}
	\item $U_5(0) = \{2,3,4\}; \quad U_6(0) = \{1,3,4\}; \quad U_7(0) = \{1,2,4\}; \quad U_8(0) = \{1,2,3\} $
	\item $ V(0) = \{5,6,7,8\}$
	\item No adjustments to the $\pt_{i}$ are made. The oriented priorities are: \\
		\begin{tabular}{l l l l}
		$\pt_{15} = 0$ & $\pt_{16} = 100$ & $\pt_{17}=300$ & $\pt_{18}=600$ \\
			$\pt_{25} = 100$ & $\pt_{26} = 0$ & $\pt_{27}=300$ & $\pt_{28}=1600$ \\
			$\pt_{35} = 125$ & $\pt_{36} = 125$ & $\pt_{37}=0$ & $\pt_{38}=750$ \\
			$\pt_{45} = 118$ & $\pt_{46} = 941$ & $\pt_{47}=941$ & $\pt_{48}=0$
		\end{tabular}
	\item $a_5(0) = 2.92; \quad a_6(0) = 1.72; \quad a_7(0) = 0.649; \quad a_8(0) = 0.678$ \\
		$a_{j^\ast}(0) = a_7(0) = 0.649$
	\item $\Ut(0) = \{1\}, \text{ as } S_1 = 500 \leq \pt_1(0) a_7(0) = 1000 \times 0.649$
		\begin{itemize}
			\item $\boldsymbol{f_{16}=S_{16}=50;} \quad \boldsymbol{f_{17}=S_{17}=150;} \quad \boldsymbol{f_{18}=S_{18}=300;}$
			\item $\Rt_6(1) = 2000-50=1950; \quad \Rt_7(1)=1000-150=850; \quad \Rt_8(1)=2000-300=1700$
			\item $U_5(1) = \{2,3,4\}; \quad U_6(1)=\{3,4\}; \quad U_7(1)=\{2,4\}; \quad U_8(1)=\{2,3\}$
		\end{itemize}
\end{enumerate}

\underline{$k=1:$}
\begin{enumerate}
\setcounter{enumi}{1}
	\item $V(1) = \{5,6,7,8\}$
	\item No changes are made to priorities or oriented priorities.
	\item $a_5(1)=2.92; \quad a_6(1) = 1.83; \quad a_7(1) = 0.685; \quad a_8(0) = 0.723$ \\
		$a_{j^\ast}(1) = a_7(1) = 0.685$
	\item $\Ut(1) = \emptyset,$ as $S_2=2000 \nleq \pt_2(1) a_7(1) = 1370,$ and $S_4=1700 \nleq \pt_4(1) a_7(1) = 1370,$
	\begin{itemize}
		\item $\boldsymbol{f_{25} = \pt_{25}(1)a_7(1) = 68.5}$
		\item $\boldsymbol{f_{27} = \pt_{27}(1)a_7(1) = 205.5}$
		\item $\boldsymbol{f_{28} = \pt_{28}(1)a_7(1) = 1096}$
		\item $\boldsymbol{f_{45} = \pt_{45}(1)a_7(1) = 80.6}$
		\item $\boldsymbol{f_{46} = \pt_{46}(1)a_7(1) = 644.5}$
		\item $\boldsymbol{f_{47} = \pt_{47}(1)a_7(1) = 644.5}$
		\item $\Rt_5(2)=1000-68.5-80.6=850.9; \quad \Rt_6(2) = 1950-644=1305.5;$ \\
			$\Rt_7(2)=850-644.5-204.5=0; \quad \Rt_8(2)=1700-1096=604$
		\item $U_5(2) = \{3\}; \quad U_6(2)=\{3\}; \quad U_7(2)=\emptyset; \quad U_8(2)=\{3\}$
	\end{itemize}
\end{enumerate}

\underline{$k=2:$}
\begin{enumerate}
\setcounter{enumi}{1}
	\item $V(2) = \{5,6,8\}$
	\item No changes are made to priorities or oriented priorities.
	\item $a_5(2) = 6.81; \quad a_6(2) = 10.45; \quad a_8(2) = 0.805$ \\
		$a_{j^\ast}(2) = a_8(2) = 0.805$
	\item $\Ut(2) = \{3\},$ as $S_3 = 800 \leq \pt_3(2) a_8(2) = 805$
	\begin{itemize}
		\item $\boldsymbol{f_{35}=S_{35}=100;} \quad \boldsymbol{f_{36}=S_{36}=100;} \quad \boldsymbol{f_{38}=S_{38}=600;}$
		\item $U_j(3) = \emptyset \quad \forall j$
	\end{itemize}
\end{enumerate}

\underline{$k=3:$}
\begin{enumerate}
\setcounter{enumi}{1}
	\item $V(3) = \emptyset$, so the algorithm terminates. All flows have been found.
\end{enumerate}

As discussed in Section~\ref{subsec_mimo}, for $N>1$ the demand-proportional priority
node model of Bliemer~\citeyearpar{bliemer07} cannot be written as a special case of this
node model, and that replacing the priority constraint~\eqref{mimo_priority_constraint}
with a constraint of the form~\eqref{mimo_demand_proportional} leads to underused supply.
Let us consider a simple example:

\textbf{Example.} Suppose we have a 2-by-2 intersection, with parameters

\begin{tabular}{l l l l}
 $C=1$ & & $p_1=0.5$ & $p_2=0.5$ \\
 $S_1=1000$ & $S_2=1000$ & $R_1=600$ & $R_2=1000$ \\
 $\beta_{11} = \sfrac{9}{10}$ & $\beta_{12} = \sfrac{1}{10}$ & & \\
 $\beta_{21} = 0$ & $\beta_{22} = 1$ & & \\
\end{tabular}

It should be clear that our algorithm will solve this problem in two iterations.
In the first iteration, we will have
\begin{align*}
	a_1(0) &= \sfrac{600}{900} = \sfrac{2}{3} = a_{j^\ast}(0) \\
	a_2(0) &= \sfrac{1000}{1100} = \sfrac{10}{11}
\end{align*}
and, as $U_{j^\ast}(0) = U_{1}(0) = \{1\}$, the flows found will be
\begin{align*}
	f_{11} &= 600 \\
	f_{12} &= \sfrac{2}{3} \times 100 \approx 67.
\end{align*}
In the second iteration, link $i=2$ will fill up all remaining space in link $j=2$:
\begin{align*}
	f_{21} &= 0 \\
	f_{22} &= R_2 - f_{12} \approx 933,
\end{align*}
and the total flow passing through the node is
\[
f_{11}+f_{12}+f_{21}+f_{22} = 1600.
\]

On the other hand, in the Bliemer model we first encounter
the supply restriction $R_1$ that, together with the FIFO
constraint~\eqref{mimo_fifo_constraint}, reduces the flow
from input link 1 by the factor $\sfrac{2}{3}$, and we get
\begin{align*}
	f_{11} &= 600 \\
	f_{12} &= \sfrac{2}{3} \times 100 \approx 67,
\end{align*}
just as in our calculation above.
There is no flow from input link 2 to output link 1: $f_{21}=0$.
Next, condition~\eqref{mimo_demand_proportional} implies that
\[
\frac{f_{11}}{S_{11}}=
\frac{f_{12}+f_{22}}{S_{12}+S_{22}}=\frac{600}{900},
\]
which yields:
\[
f_{22} = \frac{2}{3}(S_{12}+S_{22}) - f_{21} \approx 667,
\]
and so, the total flow through the node is
\[
f_{11}+f_{12}+f_{21}+f_{22} = 1334.
\]
As we see, the Bliemer model violates requirement 2 from the list in the
beginning of Section~\ref{sec_node_model}: it does not maximize the total
flow through the node.
Failure to maximize the flow happened in the LP setting of~\citet{bliemer07}
due to the \emph{incorrect constraint}~\eqref{mimo_demand_proportional}
that was imposed in pursuit of an easy formulation of the optimization
problem.

It was shown in~\citet{tampere11} that the Bliemer model also violates the
invariance principle --- requirement 7 from the list.
The violation of the invariance principle happened because supply allocation
was governed by the input demand.
The input link started to congest, and the demand jumped in infinitely small
time, which immediately triggered change in the supply allocation
of the output link, ultimately resulting in discontinuous flow.
In a discrete time system, such as LNCTM, these considerations
may be simply ignored.
However, since our proposed node model is generic and can be used outside
of the LNCTM, we should note that to avoid violating the invariance principle,
input link priorities must not depend on traffic state.

\subsection{Single-Input-Multiple-Output (SIMO) Node with Relaxed FIFO Condition}
\label{subsec_simo}
Sometimes the FIFO rule may be too restrictive.
An example of such situation is a junction, where a 1-lane off-ramp
diverges from a 5-lane freeway, especially, if this off-ramp
is relatively short and terminates with a signal.
Jammed off-ramp blocks the flow on the whole freeway, which is not realistic.
A more realistic behavior would be for this jammed off-ramp
to only partially restrict the mainline flow.
For instance, one could say that this off-ramp restricts only 40\% of the
mainline traffic.
On the other hand, if the mainline happens to be jammed,
we want 100\% of the off-ramp traffic to be restricted.
Thus, in a given diverge junction FIFO restrictions on outgoing
flows are not symmetric.
The model of a diverge node with a partial FIFO restriction is described next.

Consider a node with a single input and $N$ output links.
Given are the input demand per commodity, $S^c$, $c=1,\dots,C$,
the output supply $R_j$, split ratios $\beta_{1j}^c$, $j=1,\dots,N$,
and the mutual restriction matrix
$\{\eta_{jj'}\}\in \mathbb{R}^{N\times N}$, $\eta_{jj'}\in[0, 1]$, $\eta_{jj}=1$.
Element $\eta_{jj'}$ of the mutual restriction matrix specifies the portion
of the flow in the output link $j'$ affected by the restriction of the output link $j$.
In the example above, where the mainline output link is identified as 1
and the off-ramp as 2, this matrix assumes the form
$\left[\begin{array}{cc}1 & 1\\
0.4 & 1\end{array}\right]$.

To formulate the optimization problem for the SIMO case,
we can re-use the objective function~\eqref{mimo_objective}
and constraints~\eqref{mimo_nonnegativity_constraint}-\eqref{mimo_proportionality_constraint}
directly, with $M=1$;
constraint~\eqref{mimo_priority_constraint} can be dropped, since
there is no competition for the output supply between incoming flows;
and the FIFO constraint~\eqref{mimo_fifo_constraint} has to be replaced.

Using~\eqref{eq_oriented_demand}, we obtain oriented demand:
\be
S_{1j}^c = \beta_{1j}^c S^c, \;\;\; j=1,\dots,N.
\label{eq_oriented_demand2}
\ee
The relaxed FIFO constraint can now be written:
\be
\sum_{c=1}^C f_{1j}^c\leq
\left(1-\eta_{j'j}\right)\sum_{c=1}^C S_{1j}^c +
\eta_{j'j}
\left(\frac{R_{j'}}{\sum_{c=1}^C S_{1j'}^c}\right)
\left(\sum_{c=1}^C S_{1j}^c\right), \;
j=1,\dots,N, \; \forall j'\neq j.
\label{simo_rfifo_constraint}
\ee
The right hand side of the last inequality consists of two terms.
The first term represents the portion of the oriented demand
unaffected by a possible supply shortage in the output link $j'$.
If the full FIFO rule is enforced, that is, $\eta_{j'j}=1$, then this
term equals 0.
The second term represents the portion of the oriented demand influenced
by the output $j'$.
The necessary, but not sufficient, condition for activation 
of this constraint is
$\frac{R_{j'}}{\sum_{c=1}^C S_{1j'}^c} < 1$
for some $j'$.
If the FIFO rule is abandoned, that is, $\eta_{j'j}=0$, then
the second term equals 0.

The algorithm for computing input-output flows $f_{1j}^c$ follows.
\begin{enumerate}
\item For each output $j$, compute the reduction factor:
\be
\alpha_{j} = \min\left\{1, \frac{R_j}{\sum_{c=1}^C S_{1j}^c}\right\}
\label{eq_reduction_factor}
\ee
The factor $\alpha_{j}$ defines by how much the demand directed to the
output link $j$ must be scaled down to satisfy the supply constraint $R_j$.

\item For each output $j$, compute the total flow directed to this output link:
\be
f_{1j} = \min\left\{\alpha_j\sum_{c=1}^C S_{1j}^c,\;
\min_{j'\neq j}\left\{(1-\eta_{j'j})\sum_{c=1}^C S_{1j}^c +
\eta_{j'j}\alpha_{j'}\sum_{c=1}^C S_{1j}^c
\right\}\right\}.
\label{eq_simo_directed_flow}
\ee
Here we pick the term that restricts the flow to output link $j$ the most:
the restriction may come from the output link $j$ itself,
or from other output links, denoted $j'$, that influence link $j$ through
mutual restriction coefficients $\eta_{j'j}$.

\item Assign input-output flows per commodity:
\be
f_{1j}^c = S_{1j}^c \frac{f_{1j}}{\sum_{c'=1}^C S_{1j}^{c'}}, \;\;\;
j=1,\dots,N.
\label{eq_simo_io_flow}
\ee
\end{enumerate}

\textbf{Example.} Suppose we have one input and three outputs, with parameters:

\begin{tabular}{l l l}
$C=1$ & $S=1000$ & \\
$R_1 = 100$ & $R_2 = 400$ & $R_3 = 300$ \\
$\beta_{11} = 0.2$ & $\beta_{12} = 0.5$ & $\beta_{13} = 0.3$ \\
$\eta_{12} = 0.2$ & $\eta_{21} = 1$ & $\eta_{31} = 0$ \\
$\eta_{13} = 0$ & $\eta_{23} = 0.5$ & $\eta_{32} = 0$
\end{tabular}

Our solution algorithm for finding the resulting flows proceeds as follows:
\begin{align*}
& 1. \qquad S_{11} = 200; \quad S_{12} = 500; \quad S_{13} = 300; \\
& 2. \qquad \alpha_1 = \frac{100}{200} = 0.5;
\quad \alpha_2 = \frac{400}{500} = 0.8;
\quad \alpha_3 = \frac{300}{300} = 1 \\
& 3. \\[-\baselineskip]
& \; \qquad  \begin{aligned}
f_{11} &= \min \bigg\{0.5 \cdot 200, \, \min \bigg\{ (1-1) 200 + 1 \cdot 0.8 \cdot 200, \, (1-0) 200 + 0 \cdot 1 \cdot 200 \bigg\} \bigg\} \\
&= \min \bigg\{100, \, \min \bigg\{160, \, 200 \bigg\} \bigg\} \\
\boldsymbol{f_{11}} &= \boldsymbol{100} \\
f_{12} &= \min \bigg\{0.8 \cdot 500, \, \min \bigg\{ (1-0.2) 500 + 0.2 \cdot 0.5 \cdot 500, \, (1-0) 500 + 0 \cdot 1 \cdot 500 \bigg\} \bigg\} \\
&= \min \bigg\{400, \, \min \bigg\{450, \, 500 \bigg\} \bigg\} \\
\boldsymbol{f_{12}} &= \boldsymbol{400} \\
f_{13} &= \min \bigg\{1 \cdot 300, \, \min \bigg\{ (1-0) 300 + 0 \cdot 0.2 \cdot 300, \, (1-0.5) 300 + 0.5 \cdot 0.8 \cdot 300 \bigg\} \bigg\} \\
&= \min \bigg\{300, \, \min \bigg\{300, \, 270 \bigg\} \bigg\} \\
\boldsymbol{f_{13}} &= \boldsymbol{270} \\
\end{aligned}
\end{align*}

Finally, we state the result of this Section as a theorem.
\begin{theo}
The SIMO input-output flow computation algorithm constructs the
unique solution of the maximization
problem~\eqref{mimo_objective}-\eqref{mimo_proportionality_constraint},~\eqref{simo_rfifo_constraint} with $M=1$.
\label{theo_simo_optimal}
\end{theo}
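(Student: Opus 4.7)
The plan is to reduce the optimization problem to a separable, decoupled linear program over the aggregate per-output flows $f_{1j} := \sum_{c=1}^C f_{1j}^c$. The proportionality constraint~\eqref{mimo_proportionality_constraint} (with $M=1$) forces
\[
f_{1j}^c = \frac{S_{1j}^c}{\sum_{c'=1}^C S_{1j}^{c'}}\, f_{1j}
\]
whenever $\sum_{c'=1}^C S_{1j}^{c'} > 0$ (and $f_{1j}^c = 0$ otherwise), so the commodity decomposition is pinned down by the aggregate. This is exactly step 3 of the algorithm, and it shows that the proportionality constraint functions as a reconstruction formula rather than a genuine restriction on $f_{1j}$.

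After this substitution, the per-commodity demand constraint~\eqref{mimo_demand_constraint} summed over $c$ yields $f_{1j} \leq \sum_{c=1}^C S_{1j}^c$, which combined with the supply constraint $f_{1j} \leq R_j$ gives $f_{1j} \leq \min\{\sum_c S_{1j}^c, R_j\} = \alpha_j \sum_c S_{1j}^c$, where $\alpha_j$ is the reduction factor~\eqref{eq_reduction_factor}. The relaxed FIFO constraint~\eqref{simo_rfifo_constraint} furnishes, for each $j' \neq j$, an additional upper bound on $f_{1j}$ whose right-hand side depends only on the fixed quantities $S_{1\cdot}^c$, $R_{j'}$, and $\eta_{j'j}$. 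Because $M=1$, there is no priority constraint~\eqref{mimo_priority_constraint} to consider. Critically, every constraint on $f_{1j}$ has a right-hand side that does not depend on any other $f_{1j''}$; the problem therefore decouples completely across outputs $j$.

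Now the objective $\sum_{j=1}^N f_{1j}$ is linear and separable, and each variable $f_{1j}$ is subject only to an independent, constant upper bound. The unique maximizer is obtained by setting each $f_{1j}$ equal to its upper bound, which is the pointwise minimum of (i) $\alpha_j \sum_c S_{1j}^c$ and (ii) $(1-\eta_{j'j})\sum_c S_{1j}^c + \eta_{j'j}\alpha_{j'}\sum_c S_{1j}^c$ over all $j' \neq j$. This is precisely~\eqref{eq_simo_directed_flow}. The commodity flows~\eqref{eq_simo_io_flow} then follow by the proportionality formula, and uniqueness of the overall solution follows from uniqueness of the $f_{1j}$ together with the deterministic disaggregation.

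The main obstacle is really just careful bookkeeping: verifying that the combined supply-and-demand ceiling equals $\alpha_j\sum_c S_{1j}^c$ in every case (including the boundary case $R_j \geq \sum_c S_{1j}^c$, where $\alpha_j = 1$), and handling the edge case $\sum_c S_{1j}^c = 0$, in which the proportionality ratio is vacuous but both sides of~\eqref{eq_simo_io_flow} yield $0$ trivially. Once these checks are in place, the proof is essentially immediate from the decoupling observation.
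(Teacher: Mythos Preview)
Your argument is correct and takes a genuinely different route from the paper. The paper does not prove Theorem~\ref{theo_simo_optimal} directly at all: it states the theorem, then later establishes the general MIMO-relaxed-FIFO result (Theorem~\ref{theo_mimo_relaxed_optimal}) by tracing the iterative algorithm and showing each flow assignment hits an active constraint, and finally recovers Theorem~\ref{theo_simo_optimal} as a corollary via Lemma~\ref{lemma_mimo_simo}. Your approach instead exploits the special structure of $M=1$: since the right-hand side of~\eqref{simo_rfifo_constraint} contains only data (not other decision variables), the feasible region decouples into independent boxes for each $f_{1j}$, and the linear objective is maximized at the unique upper corner. This is more elementary and fully self-contained, avoiding the iterative machinery entirely; the paper's route is more economical only because it yields all four theorems from one argument. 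One small point worth making explicit in your write-up: the constraint~\eqref{simo_rfifo_constraint} uses the raw ratio $R_{j'}/\sum_c S_{1j'}^c$, whereas the algorithm~\eqref{eq_simo_directed_flow} uses $\alpha_{j'}=\min\{1,R_{j'}/\sum_c S_{1j'}^c\}$. When the raw ratio exceeds $1$, the FIFO bound is slack (dominated by the demand bound $\sum_c S_{1j}^c$), so replacing it with $\alpha_{j'}$ leaves the overall minimum unchanged; this is precisely the bookkeeping you flag in your last paragraph, but stating it for the $j'$ side (not just $j$) would close the loop cleanly.
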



\subsection{MIMO Node with Relaxed FIFO Condition}\label{subsec_mimo_relaxed}
For a node with $M$ input and $N$ output links we are given
input demands $S_i^c$ and split ratios $\beta_{ij}$ specified
per commodity, input priorities $p_i\geq 0$,
output supply $R_j$ and mutual restriction matrices
$\{\eta_{jj'}\}^i$ ($\eta_{jj'}^i\in[0, 1]$, $\eta_{jj}^i=1$),
$i=1,\dots,M$, $j=1,\dots,N$ and $c=1,\dots,C$.
Oriented demand $S_{ij}^c$ and oriented priorities $p_{ij}$ are defined
in~\eqref{eq_oriented_demand}.

To formulate the optimization problem for the MIMO case
with the relaxed FIFO constraint, we extend
the SIMO relaxed FIFO constraint~\eqref{simo_rfifo_constraint}
to arbitrary $M$:
\be
\sum_{c=1}^C f_{ij}^c\leq
\left(1-\eta_{j'j}^i\right)\sum_{c=1}^C S_{ij}^c +
\eta_{j'j}
\left(\frac{\sum_{c=1}^C f_{ij'}^c}{\sum_{c=1}^C S_{ij'}^c}\right)
\left(\sum_{c=1}^C S_{ij}^c\right), \;
i=1,\dots,M, \;
j=1,\dots,N, \; \forall j'\neq j.
\label{mimo_rfifo_constraint}
\ee
If $\eta_{j'j}^i=1$ for all $i$, $j$ and $j'$, then
constraint~\eqref{mimo_rfifo_constraint} is equivalent
to~\eqref{mimo_fifo_constraint}.

The optimization 
problem~\eqref{mimo_objective}-\eqref{mimo_proportionality_constraint},~\eqref{mimo_priority_constraint},~\eqref{mimo_rfifo_constraint}
satisfies the requirements listed in the beginning
of Section~\ref{sec_node_model}.
The following algorithm for computing input-output flows $f_{ij}^c$ 
generalizes the MIMO algorithm from Section~\ref{subsec_mimo}
taking into account constraint~\eqref{mimo_rfifo_constraint}.
\begin{enumerate}
\item Initialize:
\begin{eqnarray*}
\Rt_j(0) & := & R; \\
U_j(0) & := & U_j; \\
\St_{ij}^c(0) & := & S_{ij}^c; \\
\St_{ij}(0) & := & \sum_{c=1}^C \St_{ij}^c(0);\\
k & := & 0; \\
& & i = 1,\dots,M, \;\;\; j = 1,\dots,N, \;\;\; c = 1,\dots,C.
\end{eqnarray*}

\item Define the set of output links that still need processing:
\[ V(k) = \left\{j: \; U_j(k) \neq \emptyset\right\}. \]
If $V(k) = \emptyset$, stop.

\item For each input link $i\in\bigcup_{j\in V(k)} U_j(k)$,
calculate input link priority $\pt_i(k)$ according to
expression~\eqref{eq_nonzero_priorities}.

\item For each output link $j\in V(k)$ and input links $i\in U_j(k)$,
calculate oriented priorities:
\be
\pt_{ij}(k) = \pt_i(k) \frac{\St_{ij}(k)}{
\sum_{j'\in V(k)} \St_{ij'}(k)}.
\label{eq_oriented_priorities_k}
\ee

\item For each $j\in V(k)$, compute factors $a_j(k)$
according to~\eqref{eq_aj_factors}
and find the most restrictive output link $j^\ast$:
\[
j^\ast = \arg\min_{j\in V(k)} a_j(k).
\]

\item Define the set of input links,
whose demand does not exceed the allocated supply:
\[
\Ut(k) = \left\{i\in U_{j^\ast}(k):\;
\left( \sum_{j\in V(k)} \St_{ij}(k) \right) \leq \pt_i(k)a_{j^\ast}(k)\right\}.
\]
\begin{itemize}
\item If $\Ut(k) \neq \emptyset$, then for all output links $j\in V(k)$ assign:
\begin{eqnarray}
f_{ij}^c & = & \St_{ij}^c(k), \;\;\; i\in\Ut(k),\;\; c=1,\dots,C; \label{eq_mimo_relaxed_freeflow} \\
\Rt_j(k+1) & = & \Rt_j(k) - \sum_{i\in\Ut(k)}\sum_{c=1}^C f_{ij}^c; \nonumber \\
U_j(k+1) & = & U_j(k) \setminus \Ut(k).  \nonumber
\end{eqnarray}
\item Else, for all input links $i\in U_{j^\ast}(k)$,
output links $j\in V(k)$ and commodities $c=1,\dots,C$, assign:
\begin{eqnarray}
f_{ij}^c & = & \St_{ij}^c(k)
\frac{\pt_{ij}(k)a_{j^\ast}(k)}{\sum_{c'=1}^C \St_{ij}^{c'}(k)},
\;\;\; j:~\eta_{j^\ast j}^i = 1;
\label{eq_mimo_relaxed_step6_1} \\
\St_{ij}^c(k+1) & = & \St_{ij}^c(k)\frac{\St_{ij}(k+1)}{\St_{ij}(k)},
\;\; j:~\eta_{j^\ast j}^i< 1, \;\; i\in U_j(k)\cap U_{j^\ast}(k);
\label{eq_mimo_relaxed_step6_2} \\
\mbox{ where} & & \nonumber \\
\St_{ij}(k+1) & = & \min\left\{\St_{ij}(k),\; \left(1-\eta_{j^\ast j}^i\right)
\sum_{c=1}^C S_{ij}^c + \eta_{j^\ast j}^{i} 
\frac{\sum_{c=1}^C f_{ij^\ast}^c}{\sum_{c=1}^C S_{ij^\ast}^c}
\sum_{c=1}^C S_{ij}^c \right\};
\label{eq_mimo_relaxed_step6_3} \\
\St_{ij}^c(k+1) & = & \St_{ij}^c(k), \;\;\;
i \not\in U_j(k)\cap U_{j^\ast}(k);\nonumber \\
\Rt_j(k+1) & = & \Rt_j(k) -
\sum_{i\in U_{j^\ast}(k):\eta_{j^\ast j}^i=1} \; \sum_{c=1}^C f_{ij}^c; \nonumber\\
U_j(k+1) & = & U_j(k) \setminus \left\{i\in U_{j^\ast}(k):\;
\eta_{j^\ast j}^i=1\right\}. \nonumber
\end{eqnarray}
Here, the assignment (\ref{eq_mimo_relaxed_step6_1}) reproduces that of
the original MIMO algorithm, (\ref{eq_mimo_step6}), and
the expression~\eqref{eq_mimo_relaxed_step6_3}
represents the selection of the minimum in
the SIMO algorithm, (\ref{eq_simo_directed_flow}),
where mutual restriction coefficients $\{\eta_{j^\ast j}\}^i$ are applied.
Equation~\eqref{eq_mimo_relaxed_step6_2} is an application of our
proportionality constraint for commodity flows,~\eqref{mimo_proportionality_constraint}.
\end{itemize}

\item Set $k:=k+1$, and return to step 2.
\end{enumerate}

This algorithm takes no more than $(M+N-2)$ iterations to complete.

{\bf Remark.}
An intuition for the composition in~\eqref{eq_mimo_relaxed_step6_3}
may be expressed as, ``is the currently-considered output $j^\ast$
the most restrictive output on the movement $(i,j)$ (RHS of the 
minimum), or has $j$ already had a greater restriction enforced upon it
(LHS of the minimum)?''

The following lemma states that in the case $M=1$, the MIMO algorithm
with relaxed FIFO condition produces exactly the same result as the 
SIMO algorithm with relaxed FIFO condition described in
Section~\ref{subsec_simo}.
\begin{lemma}
The SIMO algorithm with relaxed FIFO condition is a special case of the MIMO
algorithm with relaxed FIFO condition with $M=1$.
\label{lemma_mimo_simo}
\end{lemma}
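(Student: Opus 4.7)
The plan is to set $M=1$ throughout the MIMO algorithm with relaxed FIFO and verify, iteration by iteration, that the flows it produces coincide with those computed in one pass by the SIMO algorithm formulas~\eqref{eq_reduction_factor}--\eqref{eq_simo_io_flow}. First I would observe that, with only one input link, every nonempty $U_j(k)$ equals $\{1\}$, so the priority rule~\eqref{eq_nonzero_priorities} simply produces some positive $\pt_1(k)$ whose value cancels everywhere it appears; the oriented priority~\eqref{eq_oriented_priorities_k} reduces to $\pt_{1j}(k)=\pt_1(k)\,\St_{1j}(k)/\sum_{j'\in V(k)}\St_{1j'}(k)$; and therefore the ratio~\eqref{eq_aj_factors} satisfies $\arg\min_{j\in V(k)} a_j(k)=\arg\min_{j\in V(k)} \Rt_j(k)/\St_{1j}(k)$. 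At the very first iteration this selects the output with the smallest $\alpha_j$ from~\eqref{eq_reduction_factor}, identifying the SIMO ``most restrictive'' output.

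Next I would induct on $k$. The inductive claim is twofold: for every output already removed from $V(k)$, the flow assigned by the MIMO algorithm equals the value that the SIMO formula~\eqref{eq_simo_directed_flow} would assign; and for every output $j$ still in $V(k)$, $\St_{1j}(k)$ equals the running minimum of $\sum_c S_{1j}^c$ and of the expressions $(1-\eta_{j^\ast j}^1)\sum_c S_{1j}^c + \eta_{j^\ast j}^1\alpha_{j^\ast}\sum_c S_{1j}^c$ taken over all $j^\ast$ already eliminated. The base case $k=0$ is immediate. For the inductive step, when $\Ut(k)\neq\emptyset$ the single input is free-flow for the selected $j^\ast$, and~\eqref{eq_mimo_relaxed_freeflow} yields $f_{1j^\ast}^c = \St_{1j^\ast}^c(k)$, which by the inductive hypothesis matches the SIMO minimum. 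When $\Ut(k)=\emptyset$, the chosen $j^\ast$ is congested, and~\eqref{eq_mimo_relaxed_step6_1} assigns flow $\pt_{1j^\ast}(k)a_{j^\ast}(k)=\alpha_{j^\ast}\St_{1j^\ast}(k)=\alpha_{j^\ast}\sum_c S_{1j^\ast}^c$ on the outputs with $\eta_{j^\ast j}^1=1$ (including $j^\ast$ itself).

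The main technical step is to verify that~\eqref{eq_mimo_relaxed_step6_3} precisely propagates the ``partial FIFO'' restriction into the remaining outputs. Substituting the just-computed flow on $j^\ast$ into the second argument of the min in~\eqref{eq_mimo_relaxed_step6_3} produces exactly $(1-\eta_{j^\ast j}^1)\sum_c S_{1j}^c + \eta_{j^\ast j}^1\alpha_{j^\ast}\sum_c S_{1j}^c$, which is one of the inner terms in the SIMO formula~\eqref{eq_simo_directed_flow}; intersecting with the existing $\St_{1j}(k)$ via the outer min extends the inductive invariant to $k+1$. The per-commodity disaggregation~\eqref{eq_mimo_relaxed_step6_2} together with~\eqref{eq_simo_io_flow} ensures that the split across commodities agrees throughout.

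The main obstacle is to confirm that the \emph{order} in which the MIMO loop processes the $j^\ast$'s does not alter the final values, since the SIMO formula takes a single unordered minimum over all $j'\neq j$. This reduces to two observations: (i)~the running $\St_{1j}$ is monotone nonincreasing in $k$, so once an output $j$ is finalized, subsequent iterations could only further decrease $\St_{1j'}$'s for $j'\neq j$, but finalized flows are not revisited; and (ii)~the candidate term for a pair $(j^\ast,j)$ computed in the MIMO step only depends on the already-finalized $\alpha_{j^\ast}$, which by the inductive hypothesis equals the SIMO $\alpha_{j^\ast}$ in~\eqref{eq_reduction_factor}. Because both algorithms ultimately realize the same unordered minimum of the same finite family of upper bounds and respect the same proportionality constraint, the two sets of output flows agree, completing the proof.
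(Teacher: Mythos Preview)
Your proposal follows the same line as the paper's proof: with $M=1$ the factors $a_j(k)$ reduce to a common multiple of $R_j/\St_{1j}(k)$, so $j^\ast$ is the output with the smallest such ratio, and the updates~\eqref{eq_mimo_relaxed_step6_1} and~\eqref{eq_mimo_relaxed_step6_3} are then matched to~\eqref{eq_reduction_factor}--\eqref{eq_simo_directed_flow}. You make the induction and the running-minimum invariant for $\St_{1j}(k)$ explicit, whereas the paper compresses all of this into the single sentence ``formula~\eqref{eq_mimo_relaxed_step6_3} achieves the same result as~\eqref{eq_simo_directed_flow} through the iterative pairwise comparison.''

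One small slip to tighten: in your congested branch, for $j\neq j^\ast$ with $\eta_{j^\ast j}^1=1$ the assigned flow is $\pt_{1j}(k)a_{j^\ast}(k)$, not $\pt_{1j^\ast}(k)a_{j^\ast}(k)$ as you wrote, and your middle equality $\alpha_{j^\ast}\St_{1j^\ast}(k)=\alpha_{j^\ast}\sum_c S_{1j^\ast}^c$ silently assumes that the current $j^\ast$'s running demand has not itself been reduced by an earlier iteration. The paper's proof is equally terse at exactly this spot---it writes $f_{1j}^c=\alpha_{j^\ast}(k)\St_{1j}^c(k)$ with $\alpha_{j^\ast}(k)=R_{j^\ast}/\sum_c\St_{1j^\ast}^c(k)$ and then asserts agreement with~\eqref{eq_simo_directed_flow} as ``obvious''---so you are at parity with the paper on this delicate case rather than behind it.
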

\begin{proof}
In the case of $M=1$, factors $a_j(k)$, defined in~\eqref{eq_aj_factors},
reduce to:
\[
a_j(k) = \frac{R_j\sum_{c=1}^C \St_1^c(k)}{\sum_{c=1}^C \St_{1j}^c(k)}.
\]
Thus,
\[
j^\ast = \arg\min_{j\in V(k)}
\frac{R_j\sum_{c=1}^C \St_1^c(k)}{\sum_{c=1}^C \St_{1j}^c(k)}
= \arg\min_{j\in V(k)}\frac{R_j}{\sum_{c=1}^C \St_{1j}^c(k)},
\]
MIMO algorithm with $M=1$ goes into
iteration $k+1$ only if $\Ut(k)=\emptyset$, which is equivalent to
$\frac{R_{j^\ast}}{\sum_{c=1}^C \St_{1j^\ast}^c(k)}<1$.
Then the assignment of input-output
flows $f_{1j}^c$ for output links $j$, such that $\eta_{j^\ast j}^1=1$,
is given by~\eqref{eq_mimo_relaxed_step6_1} that translates to:
\[
f_{1j}^c=\alpha_{j^\ast}(k)\St_{1j}^c(k), \;\;\; c=1,\dots,C,
\]
where
\[
\alpha_{j^\ast}(k) = 
\frac{R_{j^\ast}}{\sum_{c=1}^C \St_{1j^\ast}^c} =
\min\left\{1, \; \frac{R_{j^\ast}}{\sum_{c=1}^C \St_{1j^\ast}^c(k)}
\right\}, 
\]
which is the same as~\eqref{eq_reduction_factor} for the output link $j^\ast$.
Obviously, $\sum_{c=1}^C f_{1j}^c=f_{1j}$
satisfies~\eqref{eq_simo_directed_flow} for all $j$, such that
$\eta_{j^\ast j}^1=1$, which always includes $j^\ast$.
For output links $j$, such that $\eta_{j^\ast j}<1$,
formula~\eqref{eq_mimo_relaxed_step6_3} achieves the same result
as~\eqref{eq_simo_directed_flow} through the iterative pairwise comparison.
\end{proof}

The next lemma states that the when FIFO condition is fully enforced,
that is, $\eta_{jj'}^i=1$ for all output link pairs $\{j,j'\}$ and
all input links $i$, the MIMO algorithm with relaxed FIFO condition
reduces to the original MIMO algorithm from Section~\ref{subsec_mimo}.
\begin{lemma}
The original MIMO-with-FIFO algorithm is a special case of the MIMO
algorithm with relaxed FIFO condition with mutual restriction coefficients
$\eta_{jj'}^i=1$, $i=1,\dots,M$, $j,j'=1,\dots,N$.
\end{lemma}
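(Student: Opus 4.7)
The plan is to proceed by induction on the iteration index $k$, showing that, under the hypothesis $\eta_{jj'}^i=1$ for all $i,j,j'$, every quantity maintained by the MIMO-with-relaxed-FIFO algorithm---namely $\Rt_j(k)$, $U_j(k)$, $V(k)$, $\St_{ij}^c(k)$, $\pt_i(k)$, $\pt_{ij}(k)$, $a_j(k)$, $j^\ast$, $\Ut(k)$, and the assigned flows $f_{ij}^c$---coincides with the corresponding quantity produced by the original MIMO algorithm of Section~\ref{subsec_mimo}.

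First I would establish the invariant $\St_{ij}^c(k)=S_{ij}^c$ for all $k$, which follows immediately from the observation that the update~\eqref{eq_mimo_relaxed_step6_2} fires only for output links with $\eta_{j^\ast j}^i<1$, and no such output link exists under the hypothesis. As a consequence, the ``if''-branch assignment~\eqref{eq_mimo_relaxed_freeflow} and the ``else''-branch assignment~\eqref{eq_mimo_relaxed_step6_1} reduce term-by-term to those of the original algorithm (the latter now applying to every $j\in V(k)$, since $\eta_{j^\ast j}^i=1$), and the $\Rt_j$ and $U_j$ updates collapse to $\Rt_j(k+1)=\Rt_j(k)-\sum_{i\in U_{j^\ast}(k)}\sum_c f_{ij}^c$ and $U_j(k+1)=U_j(k)\setminus U_{j^\ast}(k)$, matching Section~\ref{subsec_mimo} exactly.

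The only delicate step is reconciling the two oriented-priority formulas. The numerator of~\eqref{eq_oriented_priorities_k} matches that of~\eqref{eq_oriented_priorities} by the invariant, so it suffices to show that $\sum_{j'\in V(k)}\St_{ij'}(k)=\sum_c S_i^c$ whenever $i\in U_j(k)$ for some $j\in V(k)$. I would prove this via the sub-lemma: \emph{for every $k$ and every unprocessed input link $i$, the identity $\{j':i\in U_{j'}(k)\}=\{j':S_{ij'}>0\}$ holds}. The base case is the definition of $U_j(0)$. For the inductive step, the key observation is that in both branches of step~6 an input link is either left untouched in every $U_j$ simultaneously or excised from every $U_j$ simultaneously (the ``if'' branch removes $\Ut(k)$ from all $U_j$; the ``else'' branch, under $\eta\equiv 1$, removes $U_{j^\ast}(k)$ from all $U_j$). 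Consequently an unprocessed $i$ lies in $U_{j'}(k+1)$ exactly when $S_{ij'}>0$, and every $j'$ with $S_{ij'}>0$ satisfies $U_{j'}(k+1)\neq\emptyset$, hence $j'\in V(k+1)$. Since summands with $S_{ij'}=0$ contribute zero, the desired denominator equality follows.

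The main obstacle is exactly this sub-lemma about $V(k)$; once it is established, the agreement of $\pt_{ij}(k)$ cascades to the factors $a_j(k)$, to the minimizer $j^\ast$, to the set $\Ut(k)$ (noting that $\sum_{j\in V(k)}\St_{ij}(k)=\sum_c S_i^c$ reduces the test in step~5 to the original MIMO test $\sum_c S_i^c\leq\pt_i(k)a_{j^\ast}(k)$), and finally to the flow assignments themselves, closing the induction and yielding the lemma.
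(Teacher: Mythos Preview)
Your proposal is correct and, in fact, considerably more thorough than the paper's own proof. The paper dispatches the lemma in a single line: it simply points out that step~6 of the two algorithms coincides when $\eta_{jj'}^i=1$ (comparing \eqref{eq_mimo_step6} with \eqref{eq_mimo_relaxed_step6_1}). You take the same comparison approach but carry it out as a full induction, and in particular you identify and resolve a subtlety the paper glosses over: the oriented-priority formula~\eqref{eq_oriented_priorities_k} has denominator $\sum_{j'\in V(k)}\St_{ij'}(k)$, whereas~\eqref{eq_oriented_priorities} uses $\sum_c S_i^c$, and these are not syntactically identical. Your sub-lemma showing that unprocessed input links are removed from all $U_j$ simultaneously (so that $\{j':i\in U_{j'}(k)\}=\{j':S_{ij'}>0\}$) is exactly what is needed to reconcile the two denominators, and the paper does not supply this step. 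What the paper's brevity buys is readability; what your expansion buys is an argument that actually closes.
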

\begin{proof}
The proof follows from the fact that steps~6 of the MIMO-with-FIFO
and the MIMO-relaxed-FIFO algorithms are equivalent
when $\eta_{jj'}^i=1$ (compare expressions~\eqref{eq_mimo_step6}
and~\eqref{eq_mimo_relaxed_step6_1}).
\end{proof}

{\bf Remark.}
Note that in the case of multiple input links
mutual restriction coefficients are
specified per input link.
This can be justified by the following example.
In the node representing an intersection
with 2 input and 3 output links,
shown in Figure \ref{fig-intersection},
consider the influence of the output link 5
on the output link 4.
If vehicles enter links 4 and 5 from link 1, then
it is reasonable to assume that once link 5 is jammed
and cannot accept any vehicles, there is no flow
from 1 to 4 either. 
In other words, $\eta_{54}^1=1$.
On the other hand, if vehicles arrive from link 2,
blockage of the output link 5 may hinder, but not 
necessarily prevent traffic from flowing into the output
link 4, and so $\eta_{54}^2<1$.

\begin{figure}[htb]
\centering
\includegraphics[width=1.5in]{./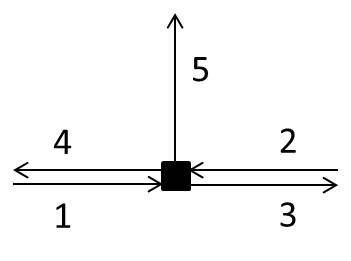}
\caption{Intersection node.}
\label{fig-intersection}
\end{figure}

\textbf{Remark.} As a rule of thumb, we suggest setting the restriction coefficient
$\eta_{jj'}^i$ equal to the portion of lanes serving movement $(i,j')$ that also serve
movement $(i,j)$. As an example, if movement $(i,j')$ is served by four lanes,
and of those lanes, one serves movement $(i,j)$, then a reasonable value for
$\eta_{jj'}^i$ is 0.25, as blockage of movement $(i,j)$ will block 25\% of
$(i,j')$'s lanes.

\textbf{Example.} Recall our earlier example for a 4-by-4 junction during our
discussion of the MIMO-with-FIFO node model in Section~\ref{subsec_mimo}.
Recall that our example supposed that the input links (see
Figure~\ref{fig-mimo_example} for the network topology) had capacity values
of $F_1=F_3=1000$, and
$F_2=F_3=2000$. Say that the links with capacity of 1000 are one-lane roads,
while those links with capacity of 2000 are two-lane roads, with left and
right turns permitted from one lane each, and the through movement permitted
from both lanes. Following our rule of thumb, this would lead to a set of
reduced mutual restriction coefficients:
\begin{alignat*}{2}
	\eta_{76}^4 &= \eta_{56}^4 = 0.5    \qquad&    \eta_{75}^4 &= \eta_{57}^4 = 0 \\
	\eta_{58}^2 &= \eta_{78}^2 = 0.5    &    \eta_{75}^2 &= \eta_{57}^2 = 0
\end{alignat*}
with all other restriction coefficients taking a value of 1. Let us examine
the effect of our FIFO relaxation on this example. Since $C=1$, we ignore
the index $c$ and perform some cancellations to clean up the notation.

\underline{$k=0:$}
This iteration will proceed exactly as before, with $\Ut(0) = \{1\}$
and flows from link 1 taking the demand value.

\underline{$k=1:$}
\begin{enumerate}
\setcounter{enumi}{1}
	\item $V(1) = \{5,6,7,8\}$
	\item No changes are made to priorities for any $i \in \bigcup_{j\in V(1)} U_j(1)$.
	\item No changes are made to oriented priorities.
	\item $a_5(1)=2.92; \quad a_6(1) = 1.83; \quad a_7(1) = 0.685; \quad a_8(0) = 0.723$ \\
		$a_{j^\ast}(1) = a_7(1) = 0.685$
	\item $\Ut(1) = \emptyset,$ as $\sum_j \St_{2j}(1)=2000 \nleq \pt_2(1) a_7(1) = 1370$ \\
	and $\sum_j \St_{4j}(1)=1700 \nleq \pt_4(1) a_7(1) = 1370$
	\begin{itemize}
		\item $\boldsymbol{f_{47} = \pt_{47}(1)a_7(1) = 644.5}$
		\item $\boldsymbol{f_{27} = \pt_{27}(1)a_7(1) = 205.5}$
		\item $\begin{aligned}[t]
			\St_{28}(2) &= (1- \eta_{78}^2) S_{28} + \eta_{78}^2 \times \sfrac{f_{27}}{S_{27}} \times S_{28} \\
						&= (1-0.5) \times 1600 + 0.5 \times \sfrac{205.5}{300} \times 1600 \\
						&= 1348
				\end{aligned}$
		\item $\begin{aligned}[t]
			\St_{46}(2) &= (1- \eta_{76}^4) S_{46} + \eta_{76}^4 \times \sfrac{f_{47}}{S_{47}} \times S_{46} \\
						&= (1-0.5) \times 800 + 0.5 \sfrac{644.5}{800} \times 800 \\
						&= 722.25
				\end{aligned}$
		\item $\Rt_7(2)=850-644.5-204.5=0$
		\item $U_5(2) = \{2,3,4\}; \quad U_6(2)=\{3,4\}; \quad U_7(2)=\emptyset; \quad U_8(2)=\{2,3\}$
	\end{itemize}
\end{enumerate}

\underline{$k=2:$}
\begin{enumerate}
\setcounter{enumi}{1}
	\item $V(2) = \{5,6,8\}$
	\item No changes are made to priorities $p_i$.
	\item Due to changes in two oriented demands, the new oriented priorities for input links 2 and 4 must be computed:
	\begin{itemize}
		\item $\pt_{25}(2) = 2000 \times \sfrac{100}{1448} = 138.12$
		\item $\pt_{28}(2) = 2000 \times \sfrac{1348}{1448} = 1861.9$
		\item $\pt_{45}(2) = 2000 \times \sfrac{100}{822.25} = 243.24$
		\item $\pt_{46}(2) = 2000 \times \sfrac{722.25}{822.25} = 1756.8$
	\end{itemize}
	\item The coefficients $a_j(2)$ are calculated with the new oriented priorities:
	\begin{itemize}
		\item $a_5(2) = \frac{\Rt_5(2)}{\pt_{25}(2)+\pt_{35}(2)+\pt_{45}(2)} = \frac{1000}{138.12+125+243.24} = 1.92$
		\item $a_6(2) = \frac{\Rt_6(2)}{+\pt_{36}(2)+\pt_{46}(2)} = \frac{1950}{125+1756.8} = 1.04$
		\item $a_8(2) = \frac{\Rt_8(2)}{+\pt_{28}(2)+\pt_{38}(2)} = \frac{1700}{1861.9+750} = 0.651 = a_{j^\ast}(2)$
	\end{itemize}
	\item $\Ut(2) = \emptyset,$ as $\sum_j \St_{2j}(2)=1448 \nleq \pt_2(2) a_8(2) = 1301$ \\
	 and $\sum_j \St_{3j}(2)=800 \nleq \pt_3(2) a_8(2) = 651,$
	 \begin{itemize}
		 \item $\boldsymbol{f_{28} = \pt_{28}(2) a_8(2) = 1861.9 \times 0.651 = 1211.75}$
		 \item $\boldsymbol{f_{38} = \pt_{38}(2) a_8(2) = 750 \times 0.651 = 488.25}$
		 \item $\eta_{85}^2=1,$ so $\boldsymbol{f_{25} = \pt_{25}(2) a_8(2) = 138.12 \times 0.651 = 89.916}$
		 \item $\eta_{85}^3=1,$ so $\boldsymbol{f_{35} = \pt_{35}(2) a_8(2) = 125 \times 0.651 = 81.375}$
		 \item $\eta_{86}^3=1,$ so $\boldsymbol{f_{36} = \pt_{36}(2) a_8(2) = 125 \times 0.651 = 81.375}$
		 \item $\Rt_5(3) = 1000 - 89.916 - 81.375 = 828.71; \quad \Rt_6(3) = 1950 - 81.375 = 1868.6$ \\
			 $\Rt_8(3) = 1700 - 1211.75 - 488.25 = 0$
		 \item $U_5(3) = \{4\}; \quad U_6(3) = \{4\}; \quad U_8(3) = \emptyset$
	 \end{itemize}
\end{enumerate}

\underline{$k=3:$}
\begin{enumerate}
\setcounter{enumi}{1}
	\item $V(3) = \{5,6\}$
	\item No changes are made to the priorities $p_4$, those of the only link $i$ remaining.
	\item The oriented demands of link 4 remain the same as calculated in the previous iteration.
	\item $a_5(3) = \sfrac{828.71}{258.39} = 3.21$ \\
		$a_6(3) = \sfrac{1861.6}{1741.6} = 1.073 = a_{j^\ast}(3)$
	\item $\Ut(3) = \{4\}$, as $\sum_j \St_{4j} = 774 \leq \pt_4(3) a_6(3) = 2146$ \\
		$\boldsymbol{ f_{45} = \St_{45}(3) = 100;} \quad \boldsymbol{ f_{46} = \St_{46}(3) = 722.25}$ \\
		$U_j(4) = \emptyset \quad \forall j$
\end{enumerate}

\underline{$k=4:$}
\begin{enumerate}
\setcounter{enumi}{1}
	\item $V(4) = \emptyset$, so the algorithm terminates. All flows have been found.
\end{enumerate}

Comparison of the resulting flows with those from the non-relaxed FIFO example
illustrates several differences caused by this relaxation. Unsurprisingly, the
two-lane links (2 and 4) that take advantage of the FIFO relaxation send greater
amounts of vehicles to links 5, 6 and 8. Previously, these flows had been cut
off by the filling of link 7's supply. Perhaps more interesting is the effect on
the flows of link 3. In the previous example, link 3 was able to send vehicles
equal to its demand, since there was enough leftover supply in link 8 after the
flow $f_{28}$ was constrained by FIFO. In our example with relaxed
FIFO, however, link 8 had more supply consumed by the higher-priority link 2, and
it was found that link 3 was unable to send its entire demand. Since we had
assumed that link 3 was a one-lane road and thus bounded by FIFO, this restricted
its other flows as well.

In both cases, link 3 was lower priority than link 2 and 
was disadvantaged in claiming link 8's supply, but qualitatively, permitting a
FIFO relaxation for the two-lane roads exacerbated the ``advantage'' they already
had over the one-lane roads (from having priority equal to capacity).

The main result of this Section can be stated as the following theorem.
\begin{theo}
  Given a set of input links $i \in 1,\dots,M$, output links $j \in 1,\dots,N$,
  commodities $c \in 1,\dots,C$, priorities $p_i$, split ratios $\beta_i$, and
  mutual restriction coefficients $\{ \eta_{j,j'}^i \}$, the 
  algorithm of Section~\ref{subsec_mimo_relaxed} obtains the
  unique solution of the optimization 
  problem~\eqref{mimo_objective}-\eqref{mimo_proportionality_constraint},~\eqref{mimo_priority_constraint},~\eqref{mimo_rfifo_constraint}.
\label{theo_mimo_relaxed_optimal}
\end{theo}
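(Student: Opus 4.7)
The plan is to establish feasibility, optimality, and uniqueness in that order, via induction on the iteration counter $k$ of the algorithm, leveraging the two earlier lemmas. The induction hypothesis will assert that, at the start of iteration $k$, the partial assignment $\{f_{ij}^c\}$ made so far together with any completion of the remaining flows on $\bigcup_{j \in V(k)} U_j(k)$ respecting the updated oriented demands $\St_{ij}^c(k)$ and residual supplies $\Rt_j(k)$ yields a feasible solution of the full problem~\eqref{mimo_objective}--\eqref{mimo_proportionality_constraint},~\eqref{mimo_priority_constraint},~\eqref{mimo_rfifo_constraint}, and conversely every feasible solution of the original problem can be pulled back to such a completion. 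Termination in at most $M+N-2$ steps is already claimed, so the induction is finite.

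For the feasibility step I would check the constraints one by one against the assignments in step~6. Nonnegativity and the proportionality constraint~\eqref{mimo_proportionality_constraint} hold by inspection of~\eqref{eq_mimo_relaxed_freeflow},~\eqref{eq_mimo_relaxed_step6_1}, and~\eqref{eq_mimo_relaxed_step6_2}. The demand constraint~\eqref{mimo_demand_constraint} follows because in the free-flow branch $f_{ij}^c = \St_{ij}^c(k) \le S_{ij}^c$ (monotonicity of $\St_{ij}^c(k)$ in $k$ from~\eqref{eq_mimo_relaxed_step6_3}), and in the else branch the factor $a_{j^\ast}(k) \le 1$ since $\Ut(k)=\emptyset$. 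The supply constraint~\eqref{mimo_supply_constraint} is saturated at $j^\ast$ by construction of $a_{j^\ast}(k)$, and slack is passed to $\Rt_j(k{+}1)$ for $j\neq j^\ast$. The priority constraint~\eqref{mimo_priority_constraint} is verified by noting that $j^\ast$ serves as the witness output for every $i \in U_{j^\ast}(k)$ in the else branch, and that the proportional allocation $f_{ij^\ast}^c \propto \pt_{ij^\ast}(k)$ is exactly condition~(a). Finally, the relaxed FIFO constraint~\eqref{mimo_rfifo_constraint} is enforced precisely by the minimum operation in~\eqref{eq_mimo_relaxed_step6_3}, which propagates the restriction from $j^\ast$ onto every partner $j$ with $\eta_{j^\ast j}^i < 1$ before that movement is processed in a later iteration.

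For optimality and uniqueness, I would argue that the order of processing is forced. By the minimality in~\eqref{eq_aj_min}, output $j^\ast$ has the tightest supply per unit of oriented priority among $V(k)$; any feasible solution must therefore assign to the inputs $i \in U_{j^\ast}(k)$ with unsatisfied demand flows no larger than $\pt_{ij^\ast}(k)\, a_{j^\ast}(k)$, and equality is forced by the priority constraint~\eqref{mimo_priority_constraint}(b) applied at $j^\ast$ together with the proportionality constraint~\eqref{mimo_proportionality_constraint}. This pins down $f_{ij^\ast}^c$ uniquely. The relaxed FIFO constraint~\eqref{mimo_rfifo_constraint} then pins down, for each remaining $j$, the maximal admissible total $\sum_c f_{ij}^c$ in terms of $\sum_c f_{ij^\ast}^c$ through the mutual restriction coefficient $\eta_{j^\ast j}^i$; this is exactly the update~\eqref{eq_mimo_relaxed_step6_3}. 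In the free-flow branch, $\Ut(k)\neq\emptyset$ means that even the most restrictive output has enough supply to serve these inputs at their current oriented demand, so setting $f_{ij}^c = \St_{ij}^c(k)$ is both feasible and flow-maximizing. After removing $U_{j^\ast}(k)$ (or $\Ut(k)$) from each $U_j$ and updating $\Rt_j$, what remains is a strictly smaller problem of the same form, and the induction closes via Lemma~\ref{lemma_mimo_simo} as the base/boundary case.

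The main obstacle I anticipate is the interaction between the relaxed FIFO update~\eqref{eq_mimo_relaxed_step6_3} and the priority constraint in later iterations: shrinking $\St_{ij}(k)$ also changes the oriented priorities $\pt_{ij}(k{+}1)$ through~\eqref{eq_oriented_priorities_k}, which could in principle reshuffle which output becomes $j^\ast$ at iteration $k{+}1$. The delicate part of the uniqueness argument is to show that this reshuffling does not create an alternative optimum: any feasible solution whose restriction to the already-processed input–output pairs differs from the algorithm's must either violate the priority equality forced at $j^\ast$ or the relaxed FIFO inequality~\eqref{mimo_rfifo_constraint} on some movement out of $U_{j^\ast}(k)$. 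I would handle this by an exchange argument — given a hypothetical optimum $\{f_{ij}^{c,\star}\}$ that differs from the algorithm's output, identify the first iteration $k_0$ where the two disagree, and derive a contradiction with either feasibility or maximality by shifting mass at output $j^\ast(k_0)$.
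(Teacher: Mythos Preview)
Your plan follows essentially the same route as the paper: verify iteration by iteration that each flow assignment in step~6 saturates one of the constraints (supply at $j^\ast$, strict or relaxed FIFO for the other outputs via~\eqref{eq_mimo_relaxed_step6_1} and~\eqref{eq_mimo_relaxed_step6_3}, demand in the free-flow branch), and then argue that the processing order is forced because handling any $j'\neq j^\ast$ first can break~\eqref{mimo_rfifo_constraint}. Your inductive scaffold and exchange argument are a tidier wrapper than the paper's more informal presentation, and you correctly flag the recomputation of oriented priorities as the delicate point the paper glosses over; one small correction is that $\Ut(k)=\emptyset$ does not yield $a_{j^\ast}(k)\le 1$ per se but rather $\pt_i(k)\,a_{j^\ast}(k)<\sum_{j'\in V(k)}\St_{ij'}(k)$, which is exactly the inequality you need for the demand bound.
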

\begin{proof}
  The priority constraint~\eqref{mimo_priority_constraint}
  makes this optimization problem non-convex, except in the
  above-mentioned special case where $N=1$ and the priorities $p_i$
  are taken as proportional to the send functions. Recall our discussion in
  Section~\ref{subsec_mimo} that the priority constraint must be replaced
  entirely with the unrealistic constraint~\ref{mimo_demand_proportional}
  to obtain an LP. Our priority constraint increases the complexity of
  the problem significantly. In fact, we conjecture that
  in the MIMO case, both with and without relaxation of the FIFO constraint,
  there is no way to verify a solution faster than re-solving the
  problem~\eqref{mimo_objective}-\eqref{mimo_proportionality_constraint},~\eqref{mimo_priority_constraint},~\eqref{mimo_rfifo_constraint}. Thus, we will prove optimality
  by showing that as our algorithm proceeds through iterations, it
  constructs the unique optimal solution.
   
  We may decompose the problem into finding the $M \cdot N \cdot C$ interrelated quantities
  $\{f_{ij}^c\}$. The $C$ flows for each $(i,j)$ are further constrained by our
  commodity flow proportionality constraint~\eqref{mimo_proportionality_constraint};
  solving for one of $\{f_{ij}^1,\dots,f_{ij}^C\}$ also constrains them all. Our
  task then becomes finding optimal values for each of $M \cdot N$ subsets
  $\{f_{ij}^1,\dots,f_{ij}^C\}$. Our algorithm finds at least one of these $M \cdot N$
  subsets per iteration $k$. The assignments
  are done by either equation~\eqref{eq_mimo_relaxed_freeflow} or
  equation~\eqref{eq_mimo_relaxed_step6_1}.
  Over iterations, subsets are assigned to build up the unique optimum solution.
  We can show that each subset assigned is optimal; that is, at least one of the
  constraints is tight.
  
  Let us first consider formulae~\eqref{eq_mimo_relaxed_step6_2}-\eqref{eq_mimo_relaxed_step6_3},
  our implementation of the relaxed FIFO constraint. In step 5 of our algorithm,
  we identify a single output link as $j^\ast$. The argmin in this step picks
  out a single link as the most restrictive of all output links. By the relaxed
  FIFO construction, all $i\in U_{j^\ast}$ will feel a relaxed FIFO effect
  instigated by $j^\ast$ as the most restrictive link. For a generic
  $j \neq j^\ast$, equation~\eqref{eq_mimo_relaxed_step6_3} enforces the
  relaxed FIFO constraint by decaying the oriented demand $\St_{ij}$. In fact, this
  oriented demand acts as a proxy for the restricted FIFO constraint. Since a $f_{ij}^c$ that
  is restricted by relaxed FIFO will, by construction, never obtain $f_{ij}^c=S_{ij}^c$,
  the running quantity $S_{ij}^c(k)$ represents an ``effective'' oriented demand
  after application of restricted FIFO. If it turns out that a flow is found
  $f_{ij}^c=\St_{ij}^c(k) \leq S_{ij}^c$ for some $k$, then this flow value has an
  active constraint imposed by relaxed FIFO.
    
  Now consider flows assigned by equation~\eqref{eq_mimo_relaxed_step6_1}.
  In the special case where $j=j^\ast$, we have
    \begin{align*}
      \sum_{c=1}^C \sum_{i \in U_{j^\ast}(k)} f_{ij^\ast}^c &= \sum_{c=1}^C \sum_{i \in U_{j^\ast}(k)} \St_{ij^\ast}(k) \frac{\pt_{ij^\ast}(k) a_{j^\ast}(k)}{\sum_{c'=1}^C \St_{ij^\ast}^{c'}(k)} \\
      &= \sum_{i \in U_{j^\ast}(k)} \pt_{ij^\ast}(k) a_{j^\ast}(k) \\
      &= \left( \sum_{i \in U_{j^\ast}(k)} \pt_{ij^\ast}(k) \right) \frac{\Rt_{j^\ast}(k)}{\sum_{i \in U_{j^\ast}(k)} \pt_{ij^\ast}(k)} \\
      &= \Rt_{j^\ast}(k),
    \end{align*}
  so all the flows into link $j^\ast$ take up all available supply. These
  flows are thus constrained by the supply constraint,~\eqref{mimo_supply_constraint}.
  
  For flows assigned
  by equation~\eqref{eq_mimo_relaxed_step6_1} in the case of $j\neq j^\ast$,
  we have $i \in U_{j^\ast}$, and a restriction coefficient $\eta_{j^\ast j}^i=1$
  that implies a strict FIFO constraint. In other words, we need
  \begin{align*}
	  \frac{\sum_{c=1}^C f_{ij}^c}{\sum_{c=1}^C \left( f_{ij}^c + f_{ij^\ast}^c \right)} = 
	  \frac{\sum_{c=1}^C \St_{ij}^c(k)}{\sum_{c=1}^C \left( \St_{ij}^c(k) + \St_{ij^\ast}^c(k) \right)}.
  \end{align*}
  Examining the denominator of the LHS, we can write
  \begin{align*}
	  \sum_{c=1}^C f_{ij}^c + f_{ij^\ast}^c &= \sum_{c=1}^C \left( \pt_{ij}^c(k) + \pt_{ij^\ast}^c(k) \right) a_{j^\ast}(k) \\
	  &= \sum_{c=1}^C \frac{\pt_i(k)}{\sum_{j \in V(k)} \St_{ij}^c(k)} \left( \St_{ij}(k) + \St_{ij^\ast}(k) \right) a_{j^\ast}(k).
  \end{align*}
  Straightforward algebra then gives
  \begin{align*}
	  \frac{\sum_{c=1}^C f_{ij}^c}{\sum_{c=1}^C \left( f_{ij}^c + f_{ij^\ast}^c \right)}
	  &= \sum_{c=1}^C \frac{\pt_i(k)}{\pt_i(k)} \times \frac{\St_{ij}^c(k)}{\St_{ij}^c(k) + \St_{ij^\ast}^c(k)}
		  \times \frac{\sum_{j \in V(k)} \St_{ij}^c(k)}{\sum_{j \in V(k)} \St_{ij}^c(k)} \times \frac{a_{j^\ast}(k)}{a_{j^\ast}(k)} \\
      &= \frac{\sum_{c=1}^C \St_{ij}^c(k)}{\sum_{c=1}^C \left( \St_{ij}^c(k) + \St_{ij^\ast}^c(k) \right)}.
  \end{align*}
  as originally required. Thus, flows $\{f_{ij}^c\}, j\neq j^\ast$ assigned in
  equation~\eqref{eq_mimo_relaxed_step6_1} are constrained by the strict FIFO
  constraint,~\eqref{mimo_rfifo_constraint} with $\eta_{j^\ast j}^i=1$.
  
  For flows assigned by equation~\eqref{eq_mimo_relaxed_freeflow}, we have
  for each flow $i$, $f_{ij}^c = \St_{ij}^c(k)$, so these flows are constrained
  by~\eqref{mimo_demand_constraint}.
  
  Uniqueness of the solution is equivalent to showing that we must assign
  flows in the order described. That is, that flows for the most restrictive
  link $j^\ast$ should be assigned at iteration $k$. It turns out that setting
  the flows of any other $j\neq j^\ast$ may produce solutions that violate the
  problem constraints. At iteration $k$, let $j'$
  be some other constrained outgoing link, with $a_{j^\ast}(k) < a_{j'}(k)$,
  and let $i'$ be any input link in $U_{j^\ast}(k) \cap U_{j'}(k)$.
  Proceeding through our algorithm, we will obtain an upper bound on the sum of
  flows from $i'$ to $j'$ through the restriction coefficient:
  \begin{align*}
  \sum_{c=1}^C f_{i'j'}^c &\leq \min \left\{ \St_{i'j'}(k+1), \pt_{i'}(k+1) a_{j'}(k+1) \right\} \\
  &\leq \St_{i'j'}(k+1) \\
  &\leq (1-\eta_{j^\ast j'}^{i'}) \sum_{c=1}^C S_{i'j'}^c + 
  \eta_{j^\ast j'}^{i'} \frac{ \sum_{c=1}^C f_{i'j^\ast}^c}{\sum_{c=1}^C S_{i'j^\ast}^c} \sum_{c=1}^C S_{i'j'}^c \triangleq \overline{f_{i'j'}}.
  \end{align*}
  
  Suppose instead that we had decided to set flows into $j'$ at iteration $k$.
  Then we would have
  \begin{align}
  \sum_{c=1}^C f_{i'j'}^c &= \pt_{i'j'}(k) a_{j^\ast}(k)
  = \frac{\pt_{i'j'}(k)}{\sum_{i \in U_{j'}(k)} \pt_{i'j'}(k)} \Rt_{j'}(k).
  \label{eq_flow_max_proof}
  \end{align}
  
  Note that~\eqref{eq_flow_max_proof} does not make use of the restriction
  coefficient $\eta_{j^\ast j'}^{i'}$, and therefore does not take (relaxed)
  FIFO into consideration. In fact, for a suitably large $\eta_{j^\ast j'}^{i'}$
  and/or small ratio $\frac{ \sum_{c=1}^C f_{i'j^\ast}^c}{\sum_{c=1}^C S_{i'j^\ast}^c}$, we can obtain
  \begin{align*}
  \frac{\pt_{i'j'}(k)}{\sum_{i \in U_{j'}(k)} \pt_{i'j'}(k)} \Rt_{j'}(k) &>
  \left( \eta_{j^\ast j'}^{i'}
  \frac{\sum_{c=1}^C f_{i'j^\ast}^c}{\sum_{c=1}^C S_{i'j^\ast}^c}
  + 1- \eta_{j^\ast j'}^{i'} \right) \sum_{c=1}^C S_{i'j'}^c \\
  \sum_{c=1}^C f_{i'j'}^c &> \overline{f_{i'j'}},
  \end{align*}
  breaking our upper bound imposed by (relaxed) FIFO~\eqref{mimo_rfifo_constraint}.
\end{proof}

\begin{cor}
Theorems~\ref{theo_miso_optimal},~\ref{theo_mimo_optimal},
and~\ref{theo_simo_optimal} follow from the
Theorem~\ref{theo_mimo_relaxed_optimal} as special cases.
\end{cor}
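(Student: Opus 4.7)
The plan is to derive each of Theorems~\ref{theo_miso_optimal},~\ref{theo_mimo_optimal}, and~\ref{theo_simo_optimal} from Theorem~\ref{theo_mimo_relaxed_optimal} by combining (a) the algorithmic-specialization lemmas already proved in Sections~\ref{subsec_mimo} and~\ref{subsec_mimo_relaxed} with (b) a direct verification that the optimization problem of the simpler case agrees with the appropriately specialized MIMO-relaxed-FIFO problem. Once both correspondences are in place, the uniqueness and maximality conclusions transfer immediately, which is all the corollary asserts.

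For Theorem~\ref{theo_mimo_optimal}, I would invoke the (unlabeled) lemma at the end of Section~\ref{subsec_mimo_relaxed}, which identifies the MIMO-with-FIFO algorithm as the MIMO-relaxed-FIFO algorithm with $\eta_{jj'}^i = 1$. It then remains to match the constraint sets: substituting $\eta_{jj'}^i = 1$ into \eqref{mimo_rfifo_constraint} yields the one-sided ratio inequality $\sum_c f_{ij}^c/\sum_c S_{ij}^c \le \sum_c f_{ij'}^c/\sum_c S_{ij'}^c$ for every ordered pair $(j,j')$, and applying it with $(j,j')$ swapped gives the equality of ratios that is \eqref{mimo_fifo_constraint}. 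All other constraints appear unchanged, so the two problems coincide and Theorem~\ref{theo_mimo_optimal} follows.

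For Theorem~\ref{theo_simo_optimal}, I would invoke Lemma~\ref{lemma_mimo_simo} for the algorithmic side. On the optimization side, setting $M = 1$ makes clause (a) of the priority constraint \eqref{mimo_priority_constraint} vacuous (no competing inputs), while clause (b) combined with the supply cap \eqref{mimo_supply_constraint} forces $\sum_c f_{1j'}^c = R_{j'}$ at any restricting output $j'$. At the optimum of \eqref{mimo_objective}, this equals $\alpha_{j'}\sum_c S_{1j'}^c$ with $\alpha_{j'}$ as in \eqref{eq_reduction_factor}, and substituting that value into \eqref{mimo_rfifo_constraint} produces exactly the SIMO constraint \eqref{simo_rfifo_constraint}. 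For Theorem~\ref{theo_miso_optimal}, I would compose Step~1 (setting $\eta_{jj'}^i = 1$ to reduce to MIMO-with-FIFO) with Lemma~\ref{lemma_mimo} (which handles $N = 1$ at the algorithmic level). When $N = 1$, $\beta_{i1}^c = 1$, oriented demands collapse to $S_{i1}^c = S_i^c$, oriented priorities to $p_{i1} = p_i$, the FIFO constraint \eqref{mimo_fifo_constraint} is a tautology, and each constraint of \eqref{mimo_objective}-\eqref{mimo_priority_constraint} collapses term-by-term to the corresponding constraint of \eqref{miso_objective}-\eqref{miso_priority_constraint}.

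The main obstacle is the SIMO case, because the collapse of the priority constraint there is not syntactic: it requires noting that clause (b) of \eqref{mimo_priority_constraint} at $M = 1$ together with \eqref{mimo_supply_constraint} pins $\sum_c f_{1j'}^c$ to the binding supply whenever $j'$ is active, and it is precisely this equality at the optimum that reconciles the denominator $\sum_c f_{1j'}^c$ appearing in \eqref{mimo_rfifo_constraint} with the denominator $R_{j'}$ (through $\alpha_{j'}$) appearing in \eqref{simo_rfifo_constraint}. The MIMO and MISO reductions, by contrast, are essentially bookkeeping: substitute the parameter choices into the constraint set and cite the already-proved algorithmic lemmas. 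In all three cases, once the constraint-set equivalence is established, uniqueness and maximality follow verbatim from Theorem~\ref{theo_mimo_relaxed_optimal}.
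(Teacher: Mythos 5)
The paper states this corollary with no proof at all, so your job was to reconstruct the implicit argument. Your overall strategy --- pair each algorithmic-specialization lemma (Lemma~\ref{lemma_mimo}, Lemma~\ref{lemma_mimo_simo}, and the unlabeled $\eta_{jj'}^i=1$ lemma) with a verification that the corresponding constraint sets coincide --- is exactly what the authors intend, and your MISO and MIMO reductions are correct: for the latter, the paper itself already records that \eqref{mimo_rfifo_constraint} with $\eta_{j'j}^i=1$ is equivalent to \eqref{mimo_fifo_constraint}, which is the ordered-pair/swap observation you make, and the $N=1$ collapse of \eqref{mimo_objective}--\eqref{mimo_priority_constraint} onto \eqref{miso_objective}--\eqref{miso_priority_constraint} is indeed bookkeeping.

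The gap is in the SIMO step, which you correctly flag as the hard part but do not close. You argue that clause (b) of \eqref{mimo_priority_constraint} pins $\sum_c f_{1j'}^c=R_{j'}$ at every restricting output, so that the flow ratio in \eqref{mimo_rfifo_constraint} collapses to the supply ratio $\alpha_{j'}$ of \eqref{simo_rfifo_constraint}. Two problems. First, at $M=1$ the defining condition in clause (a) for ``output $j^\ast$ restricts input $i$'' is vacuous (there is no competing $i'$), which is precisely why the paper drops \eqref{mimo_priority_constraint} from the SIMO formulation altogether rather than specializing it; you cannot lean on a constraint that the target problem does not contain to establish that the two feasible sets agree. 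Second, and more substantively, the collapse fails for an output $j'$ whose flow is throttled below $\min\{R_{j'},\sum_c S_{1j'}^c\}$ not by its own supply but by a third output $j''$ acting through $\eta_{j''j'}^{1}>0$: there $\frac{\sum_c f_{1j'}^c}{\sum_c S_{1j'}^c}<\alpha_{j'}$, so \eqref{mimo_rfifo_constraint} for the pair $(j',j)$ is strictly tighter than \eqref{simo_rfifo_constraint} --- the restriction cascades transitively through $j'$ in the MIMO-relaxed constraint but not in the SIMO one, and the two optimization problems do not have the same feasible set. To finish the corollary for Theorem~\ref{theo_simo_optimal} you must either show that this extra tightness never excludes the SIMO optimum, or argue from Lemma~\ref{lemma_mimo_simo} that the two algorithms return the same point and then verify separately that this point is feasible and optimal under each of the two FIFO constraints.
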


\textbf{Remark.} It was discussed above that the Tamp\'{e}re et al.
node models are special cases of this node model (with all $\eta^i_{jj'}=1$
and certain choices of $p_i$).
The above proof then also applies to the similar theorem for that model.

\section{Node with Undefined Split Ratios: Traffic Assignment}\label{sec_sr_assignment}
\newcommand{\obeta}{\overline{\beta}}
\newcommand{\tbeta}{\tilde{\beta}}
In this section we consider a MIMO node with $M$ input links, $N$ output links
and $C$ commodities, where some of the split ratios $\beta_{ij}^c$
are not defined a priori and must be computed as functions
of the input demand $S_i^c$, priorities $p_i$ and the output supply $R_j$,
$i=1,\dots,M$, $j=1,\dots,N$ and $c=1,\dots,C$.
This may occur if the modeler chooses to let drivers at a node select
a route to their destination in response to changing conditions;
a specific example of such a node is given in an example
describing an interchange between a freeway
and a managed lane to be discussed after presentation of the
algorithm.
Here we present the algorithm for computing undefined split ratios
based on the following definitions and assumptions:
\begin{itemize}
\item Define the set of commodity movements, for which split ratios
are known, $\BB=\left\{\left\{i,j,c\right\}: \; \beta_{ij}^c \in [0,1]\right\}$,
and the set of commodity movements, for which split ratios are to be
computed, $\OBB=\left\{\left\{i,j,c\right\}: \; \beta_{ij}^c \mbox{ are unknown}\right\}$.

\item For a given input link $i$ and commodity $c$ such that $S_i^c=0$,
assume that all split ratios are known: $\{i,j,c\}\in\BB$.\footnote{If split
ratios were undefined in this case, they could be assigned arbitrarily.}

\item Define the set of output links, for which there exist unknown
split ratios, $V=\left\{j: \; \exists \left\{i,j,c\right\}\in\OBB\right\}$.

\item Assuming that for a given input link $i$ and commodity $c$
split ratios must sum up to 1, define the unassigned portion
of flow $\obeta_i^c=1-\sum_{j:\{i,j,c\}\in\BB}\beta_{ij}^c$.

\item For a given input link $i$ and commodity $c$ such that there exist
$\{i,j,c\}\in\OBB$, assume $\obeta_i^c>0$, otherwise the
undefined split ratios can be trivially set to 0.

\item For every output link $j\in V$, define the set of input links
that have an unassigned demand portion directed toward this output link,
$U_j=\left\{i: \; \exists\left\{i,j,c\right\}\in\OBB\right\}$.

\item For a given input link $i$ and commodity $c$ define the set
of output links, split ratios for which are to be computed,
$V_i^c = \left\{j: \; \exists i\in U_j\right\}$,
and assume that if nonempty, this set contains at least two elements,
otherwise a single split ratio can be trivially set to $\obeta_i^c$.

\item Assume that input link priorities are nonnegative, $p_i\geq 0$,
	$i=1,\dots,M$, and $\sum_{i=1}^M p_i = 1$. Note that, although in
	section~\ref{sec_node_model} we did not require the input priorities
	to sum to one, in this section we assume this normalization is done
	to simplify the notation.

\item Define the set of input links with zero priority:
$U_{zp} = \left\{i:\;p_i=0\right\}$.
To avoid dealing with zero input priorities, perform regularization:
\be
\pt_i = p_i\left(1-\frac{|U_{zp}|}{M}\right)+ \frac{1}{M}\frac{|U_{zp}|}{M}=
p_i\frac{M-|U_{zp}|}{M} + \frac{|U_{zp}|}{M^2},
\label{priority_regularization}
\ee
where $|U_{zp}|$ denotes the number of elements in set $U_{zp}$.
Expression~\eqref{priority_regularization} implies that the regularized
input priority $\pt_i$ consists of two parts:
(1) the original input priority $p_i$ normalized to the portion of 
input links with nonzero priorities; and
(2) uniform distribution among $M$ input links, $\frac{1}{M}$, 
normalized to the portion of input links with zero priorities.

Note that $\pt_i\geq 0$, $i=1,\dots,M$, and $\sum_{i=1}^M \pt_i = 1$.
\end{itemize}

The algorithm for distributing $\obeta_i^c$ among the commodity movements
in $\OBB$, that is assigning values to the a priori unknown split ratios,
aims at maintaining output links as uniformly occupied as possible.
It is described next.

\begin{enumerate}
\item Initialize:
\begin{eqnarray*}
\tbeta_{ij}^c(0) & := & \left\{\begin{array}{ll}
\beta_{ij}^c, & \mbox{ if } \{i,j,c\}\in\BB,\\
0, & \mbox{ otherwise};\end{array}\right. \\
\obeta_i^c(0) & := & \obeta_i^c; \\
\Ut_j(0) & = &  U_j; \\
\Vt(0) & = & V; \\
k & := & 0,
\end{eqnarray*}
Here $\Ut_j(k)$ is the remaining set of input links with some unassigned demand,
which may be directed to output link $j$; and
$\Vt(k)$ is the remaining set of output links, to which the still unassigned
demand may be directed.

\item If $\Vt(k)=\emptyset$, stop.
The sought-for split ratios are $\left\{\tbeta_{ij}^c(k)\right\}$,
$i=1,\dots,M$, $j=1,\dots,N$, $c=1,\dots,C$.

\item Calculate the remaining unallocated demand:
\[
\So_i^c(k) = \obeta_i^c(k) S_i^c, \;\;\; i=1,\dots,M, \;\; c=1,\dots,C.
\]

\item For all input-output link pairs
calculate oriented demand analogously to (\ref{eq_oriented_demand}):
\[ \St_{ij}^c(k) = \tbeta_{ij}^c(k) S_i^c. \]

\item For all input-output link pairs calculate oriented priorities:
\begin{eqnarray}
\pt_{ij}(k) & = & \pt_i\frac{\sum_{c=1}^C \gamma_{ij}^c S_i^c}{
\sum_{c=1}^C S_i^c}
\label{eq_oriented_priorities_undefined_sr1} \\
\mbox{with} & & \nonumber \\
\gamma_{ij}^c(k) & = & \left\{\begin{array}{ll}
\beta_{ij}^c, & \mbox{ if split ratio is defined a priori: }
\{i,j,c\}\in\BB, \\
\tbeta_{ij}^c(k) + \frac{\obeta_i^c(k)}{|V_i^c|}, &
\mbox{ otherwise},\end{array}\right.
\label{eq_oriented_priorities_undefined_sr2}
\end{eqnarray}
where $|V_i^c|$ denotes the number of elements in the set $V_i^c$.
Comparing the
expression~\eqref{eq_oriented_priorities_undefined_sr1}-\eqref{eq_oriented_priorities_undefined_sr2}
with~\eqref{eq_oriented_priorities}, one can see that 
split ratios $\tbeta_{ij}^c(k)$, which are not fully defined yet,
are complemented with a fraction of $\obeta_i^c(k)$ inversely proportional
to the number of output links among which the flow of commodity $c$
from input link $i$ can be distributed.

Note that in this step we user \emph{regularized} priorities $\pt_i$
as opposed to the original $p_i$, $i=1,\dots,M$.
This is done to ensure that inputs with $p_i=0$ are not ignored
in the split ratio assignment.

\item Find the largest oriented demand-supply ratio:
\[
\mu^+(k) = \max_j\max_i
\frac{\sum_{c=1}^C \St_{ij}^c(k)}{\pt_{ij}(k)R_j}\sum_{i\in U_j}\pt_{ij}(k).
\]

\item Define the set of all output links, where the minimum of
the oriented demand-supply ratio is achieved:
\[
Y(k) = \arg\min_{j\in\Vt(k)}\min_{i\in\Ut_j(k)}
\frac{\sum_{c=1}^C \St_{ij}^c(k)}{\pt_{ij}(k)R_j}
\sum_{i\in U_j}\pt_{ij}(k),
\]
and from this set pick the output link $j^-$ with the smallest
output demand-supply ratio is minimal (when there are multiple
minimizing output links, any of the minimizing output links
may be chosen as $j^-$):
\[
j^- = \arg\min_{j\in Y(k)}
\frac{\sum_{i=1}^M\sum_{c=1}^C\St_{ij}^c(k)}{R_j}.
\]

\item Define the set of all input links, where the minimum of
the oriented demand-supply ratio for the output link $j^-$ is achieved:
\[
W_{j^-}(k) = \arg\min_{i\in\Ut_{j^-}(k)}
\frac{\sum_{c=1}^C \St_{ij^-}^c(k)}{\pt_{ij^-}(k)R_{j^-}}
\sum_{i\in U_{j^-}}\pt_{ij^-}(k),
\]
and from this set pick the input link $i^-$ and commodity $c^-$
with the smallest remaining unallocated demand:
\[
\{i^-, c^-\} = \arg\min_{\begin{array}{c}
i\in W_{j^-}(k),\\
c:\obeta_{i^-}^c(k)>0\end{array}} \So_i^c(k).
\]

\item Define the smallest oriented demand-supply ratio:
\[
\mu^-(k) = 
\frac{\sum_{c=1}^C \St_{i^-j^-}^c(k)}{\pt_{i^-j^-}(k)R_{j^-}}
\sum_{i\in U_{j^-}}\pt_{ij-}(k).
\]
\begin{itemize}
\item If $\mu^-(k) = \mu^+(k)$, which means that the oriented demand
is perfectly balanced among the output links, 
distribute the unassigned demand proportionally
to the allocated supply:
\begin{eqnarray}
\tbeta_{i^-j}^{c^-}(k+1) & = & \tbeta_{i^-j}^{c^-}(k) + 
\frac{\pt_{i^-j}(k)R_j}{\sum_{j'\in\Vt_{i^-}^{c^-}(k)}\pt_{i^-j'}(k)R_{j'}}
\obeta_{i^-}^{c^-}(k), \;\;\; j\in \Vt_{i^-}^{c^-}(k);
\label{eq_sr_assign_1}\\
\tbeta_{ij}^{c}(k+1) & = & \tbeta_{ij}^{c}(k) \mbox{ for } 
\{i,j,c\}\neq\{i^-,j,c^-\}; \label{eq_sr_assign_2}\\
\obeta_{i^-}^{c^-}(k+1) & = & 0; \nonumber\\
\obeta_i^c(k+1) & = & \obeta_i^c(k) \mbox{ for }
\{i,c\}\neq\{i^-,c^-\}.\nonumber
\end{eqnarray}

\item Else, assign:
\begin{eqnarray}
\Delta\tbeta_{i^-j^-}^{c^-}(k) & = & \min\left\{\obeta_{i^-}^{c^-}(k),\;\;
\left(\frac{\mu^+(k)\pt_{i^-j^-}(k)R_{j^-}}{
\So_{i^-}^{c^-}(k) \sum_{i\in U_{j^-}}\pt_{ij^-}(k)} -
\frac{\sum_{c=1}^C\St_{i^-j^-}^c(k)}{\So_{i^-}^{c^-}(k)}\right)\right\};
\label{eq_sr_assign_3}\\
\tbeta_{i^-j^-}^{c^-}(k+1) & = & \tbeta_{i^-j^-}^{c^-}(k) + 
\Delta\tbeta_{i^-j^-}^{c^-}(k);
\label{eq_sr_assign_4}\\
\tbeta_{ij}^{c}(k+1) & = & \tbeta_{ij}^{c}(k) \mbox{ for }
\{i,j,c\}\neq\{i^-,j^-,c^-\}; \label{eq_sr_assign_5}\\
\obeta_{i^-}^{c^-}(k+1) & = & \obeta_{i^-}^{c^-}(k) -
\Delta\tbeta_{i^-j^-}^{c^-}(k); \nonumber \\
\obeta_i^c(k+1) & = & \obeta_i^c(k) \mbox{ for } \{i,c\}\neq\{i^-,c^-\}.
\nonumber
\end{eqnarray}
\end{itemize}

\item Update sets $\Ut_j(k)$ and $\Vt(k)$:
\begin{eqnarray*}
\Ut_j(k+1) & = & \Ut_j(k) \setminus \left\{i^-: \;
\obeta_{i^-}^c(k+1) = 0, \; c=1,\dots,C\right\}, \;\;\; j\in \Vt(k);\\
\Vt(k+1) & = & \Vt(k) \setminus \left\{j: \; \Ut_j(k+1)=\emptyset\right\}.
\end{eqnarray*}

\item Set $k := k+1$ and return to step 2.
\end{enumerate}


\begin{figure}[h]
\centering
\includegraphics[width=1.5in]{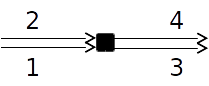}
\caption{Merge-diverge intersection example.}
\label{fig-sr_example}
\end{figure}

\textbf{Example.} Consider the node presented in
Figure~\ref{fig-sr_example}. The node represents a junction between
two parallel links, or a merge-diverge. Nodes such of these may be
used to model locations where drivers can choose to shift between
two parallel roads or lane groups; the canonical example is an
access point for a managed lane facility adjacent to a freeway,
where drivers may choose to enter or exit the managed lane. In this
example, we will say that links 1 and 3 represent a freeway's general
purpose (GP) lanes, and links 2 and 4 represent a high-occupancy vehicle
(HOV) lane, open only to a select group of vehicles. The Low-Occupancy
Vehicles (LOVs) and HOVs will be modeled as separate commodities, with
notations of $L$ for LOVs and $H$ for HOVs.

Suppose we have input parameters of:

\begin{tabular}{l l l}
 $\beta_{14}^L = 0$ & $\beta_{13}^L = 1$ & \\
 $\beta_{14}^H = ?$ & $\beta_{13}^H = ?$ & $p_1=\sfrac{3}{4}$ \\
 $\beta_{23}^H = ?$ & $\beta_{24}^H = ?$ & $p_2=\sfrac{1}{4}$ \\
 $S_1^L = 500$ & $S_1^H = 100$ & $R_3 = 600$ \\
 $S_2^H = 50$ & & $R_4 = 200$ \\
\end{tabular}

In words, the LOVs are not allowed to enter the HOV lane; HOVs are allowed,
but individual vehicles may or may not choose to do so: both links can
bring them to their eventual downstream destination so the choice of link
to take will be made in response to immediate local congestion conditions.

Let us see how algorithm would assign split ratios for the HOVs.

\underline{$k=0:$}
\begin{enumerate}
	\item $\tbeta_{13}^H(0) = \tbeta_{14}^H(0) = \tbeta_{23}^H(0) = \tbeta_{24}^H(0) = 0$ \\
		$\obeta_3^H(0) = \obeta_4^H(0) = 1$ \\
		$\Ut_3(0) = \Ut_4(0) = \{1,2\}; \qquad \Vt(0) = \{3,4\}$
	\item $\Vt(0) = \{3,4\} \neq \emptyset$, the algorithm proceeds.
	\item $\So^H_1(0) = 100, \qquad \So^H_2(0) = 50, \qquad \So_1^L(0) = 0$
	\item $\St_{13}^L(0) = 500; \qquad \St_{13}^H(0) = \St_{14}^H(0) = \St_{23}^H(0) = \St_{24}^H(0) = 0$
	\item $\gamma_{13}^H = \gamma_{14}^H = \sfrac{1}{2}; \qquad \gamma_{23}^H = \gamma_{24}^H = \sfrac{1}{2}; \qquad \gamma_{13}^L = 1$
	
		\vspace{5pt}
		$\begin{aligned}[t]
			\pt_{13}(0) &= \frac{\pt_1 \left( \gamma_{13}^H S_1^H + \gamma_{13}^L S_1^L \right)}{S_1^H + S_1^L}
						= \frac{ \sfrac{3}{4} \left( \sfrac{1}{2} \times 100 + 1 \times 500\right)}{100 + 500}
						= 0.6875 \\
			\pt_{14}(0) &= \frac{ \sfrac{3}{4} \left( \sfrac{1}{2} \times 100 + 0 \times 500\right)}{100 + 500}
						= 0.0625 \\
			\pt_{23}(0) &= \frac{\sfrac{1}{4} \left( \sfrac{1}{2} \times 50 \right)}{50} = 0.125 \\
			\pt_{24}(0) &= 0.125
		\end{aligned}$
	\item $\begin{aligned}[t]
			\mu^+(0) = \frac{\St_{13}^L(0) + \St_{13}^H(0)}{\pt_{13}(0) R_3} \sum_{i \in U_3} \pt_{i3}(0) = 0.9848
		\end{aligned}$, for all other pairs the ratio equals 0.
	\item $Y(0) = \{3,4\};$ \\
		$j^- = 4$
	\item $W_4(0) = \{1,2\}$, as there are no oriented demands to link 4 as of yet. \\
		$(i^-,c^-) = (2,H)$
	\item $\mu^-(0) = 0$
	
		\vspace{5pt}
		$\begin{aligned}[t]
			\Delta \tbeta_{i^- j^-}^{c^-}(0) &= \Delta \tbeta_{24}^H(0)
				= \min \left\{ 1, \left( \frac{ 0.9848 \times 0.125 \times 200}{50 \times 0.1875} - \frac{0}{50} \right) \right\} = \min \left\{1, 2.616 \right\}
				= 1 \\
			\tbeta_{24}^H(1) &= 0 + 1 = 1 \\
			\obeta_{2}^H(1) &= 1 - 0 = 0
		\end{aligned}$
	\item $\Ut_3(1) = \Ut_4(1) = \{1\}; \qquad \Vt(1) = \{3,4\}$
\end{enumerate}

\underline{$k=1:$}
\begin{enumerate}
\setcounter{enumi}{1}
	\item $\Vt(1) = \{3,4\}$
	\item $\So_1^H(1) = 100, \qquad \So^H_2(1) = 0, \qquad \So_1^L(1) = 0$
	\item $\St_{13}^L(0) = 500; \qquad \St_{13}^H(1) = \St_{14}^H(1) = \St_{23}^H(1) = 0; \qquad \St_{24}^H(1) = 50$
	\item $\gamma_{13}^H = \gamma_{14}^H = \sfrac{1}{2}; \qquad \gamma_{23}^H = 0; \qquad \gamma_{24}^H = 1; \qquad \gamma_{13}^L = 1$ \\
		$\pt_{13}(1) = 0.6875; \qquad \pt_{14}(1) = 0.0625;$ \\
		$\pt_{23}(1) = 0; \qquad \pt_{24}(1) = 0.25$
	\item 
	
		\vspace{5pt} $\begin{aligned}[t]
				&\frac{\St_{13}^L(1) + \St_{13}^H(1)}{\pt_{13}(1) R_3} \sum_{i \in U_3} \pt_{i3}(1) = 0.8333 = \mu^+(1) \\
				&\frac{\St_{24}^L(1) + \St_{24}^H(1)}{\pt_{24}(1) R_4} \sum_{i \in U_4} \pt_{i4}(1) = 0.3125
		\end{aligned}$
		
		\vspace{5pt} For the two other pairs the ratio equals 0.
	\item $Y(1) = \{4\};$ \\
		$j^- = 4$
	\item $W_4(1) = \{1\}$ \\
		$(i^- c^-) = (1, H)$
	\item $\mu^-(1) = 0$
		
			\vspace{5pt}
			$\begin{aligned}[t]
				\Delta \tbeta_{i^- j^-}^{c^-}(1) &= \Delta \tbeta_{14}^H(1)
					= \min \left\{ 1, \left( \frac{ 0.8333 \times 0.0625 \times 200}{100 \times 0.3125} - \frac{0}{100} \right) \right\} = \min \left\{1, \sfrac{1}{3} \right\}
					= \sfrac{1}{3} \\
				\tbeta_{14}^H(2) &= 0 + \sfrac{1}{3} = \sfrac{1}{3} \\
				\obeta_{2}^H(2) &= 1 - \sfrac{1}{3} = \sfrac{2}{3}
			\end{aligned}$
		\item $\Ut_3(2) = \Ut_4(2) = \{1\}; \qquad \Vt(2) = \{3,4\}$
\end{enumerate}

\underline{$k=2:$}
\begin{enumerate}
\setcounter{enumi}{1}
	\item $\Vt(2) = \{3,4\}$
	\item $\So_1^H(2) = 66.67, \qquad \So^H_2(2) = 0, \qquad \So_1^L(2) = 0$
	\item $\St_{13}^L(2) = 500; \qquad \St_{24}^H(2) = 50; \qquad \St_{14}^H(2) = 33.33; \qquad \St_{13}^H(2) = \St_{23}^H(2) = 0$
	\item $\gamma_{13}^H = \sfrac{1}{3}; \qquad \gamma_{14}^H = \sfrac{2}{3}; \qquad \gamma_{23}^H = 0; \qquad \gamma_{24}^H = 1; \qquad \gamma_{13}^L = 1$ \\
		$\pt_{13}(2) = \sfrac{2}{3}; \qquad \pt_{14}(2) = \sfrac{1}{12};$ \\
		$\pt_{23}(2) = 0; \qquad \pt_{24}(2) = 0.25$
	\item 
	
		\vspace{5pt} $\begin{aligned}[t]
				&\frac{\St_{13}^L(2) + \St_{13}^H(2)}{\pt_{13}(2) R_3} \sum_{i \in U_3} \pt_{i3}(2) = 0.8333 = \mu^+(2) \\
				&\frac{\St_{14}^L(2) + \St_{14}^H(2)}{\pt_{14}(2) R_4} \sum_{i \in U_4} \pt_{i4}(2) = 0.6667 \\
				&\frac{\St_{23}^L(2) + \St_{23}^H(2)}{\pt_{23}(2) R_3} \sum_{i \in U_3} \pt_{i3}(2) = 0 \\
				&\frac{\St_{24}^L(2) + \St_{24}^H(2)}{\pt_{24}(2) R_4} \sum_{i \in U_4} \pt_{i4}(2) = 0.3333
		\end{aligned}$
		
	\item $Y(2) = \{3\};$ \\
		$j^- = 3$
	\item $W_3(2) = \{1\}$ \\
		$(i^- c^-) = (1, H)$
	\item $\mu^-(2) = 0.8333 = \mu^+(2)$
		
			\vspace{5pt}
			$\begin{aligned}[t]
				\tbeta_{13}^H(3) &= 0 + \frac{ \sfrac{2}{3} \times 600}{
					\sfrac{2}{3} \times 600 + \sfrac{1}{12} \times 200} \times \sfrac{2}{3}
					= 0 + 0.96 \times \sfrac{2}{3} = 0.64 \\
				\tbeta_{23}^H(3) &= \sfrac{1}{3} + \frac{ \sfrac{1}{12} \times 200}{
					\sfrac{2}{3} \times 600 + \sfrac{1}{12} \times 200} \times \sfrac{2}{3}
					= \sfrac{1}{3} + 0.04 \times \sfrac{2}{3} = 0.36 \\
				\obeta_{1}^H(3) &= \sfrac{2}{3} - 0.64 - 0.0267 = 0
			\end{aligned}$
		\item $\Ut_3(3) = \Ut_4(3) = \emptyset; \qquad \Vt(3) = \emptyset$
\end{enumerate}

\underline{$k=3:$}
\begin{enumerate}
\setcounter{enumi}{1}
	\item $\Vt(3) = \emptyset$
\end{enumerate}
The algorithm terminates. The resulting split ratios are
$\boldsymbol{ \beta_{13}^H = 0.64, \quad
	\beta_{23}^H = 0.36, \quad \beta_{24}^H = 1,
	\quad \beta_{23}^H = 0}$.

\section{Conclusion}\label{sec_conclusion}
This paper introduced the first order macroscopic
traffic model, LNCTM,
for modeling multi-commodity traffic on road networks.
An example of the LNCTM application is the freeway network
with special lanes, where different vehicle classes have different permissions
to enter those special lanes. 

One of the features of the LNCTM is the fundamental diagram in the
shape of the ``inverse lambda'' that allows modeling of the capacity drop
and a hysteresis behavior of the traffic state in a link that goes
from free flow to congestion and back (Section~\ref{sec_lnctm}).

The cornerstone of the LNCTM is the model of a node
with multiple input and multiple output links, which computes
traffic flows per commodity
from input to output links and is governed by the input
demand, output supply, split ratios that define how incoming flows must
be distributed between output links, and input link priorities that
define the output supply allocation for incoming flows
(Sections~\ref{subsec_miso}-\ref{subsec_mimo}).
The proposed node model accepts arbitrary input link priorities,
and it is shown that the node model presented in~\citet{tampere11} is
the special case of the proposed node model when input link priorities are
proportional to input link capacities
(Sections~\ref{subsec_bliemer_tampere}).
We also analyzed the non-convex nature of the node model,
showing its difference
from the Bliemer node model \citep{bliemer07} that is formulated as an LP
(Section~\ref{subsec_bliemer_tampere}).

Next, we discussed the overly restrictive nature of the FIFO
rule on the output flows and further generalized the node model by
relaxing the FIFO condition.
Parameters, called \emph{mutual restriction coefficients} allow tuning
the node model from no FIFO when they are set to~0 to full FIFO when they are
set to~1 (Section~\ref{subsec_simo}).
Then, it was shown that this generalized node model produces unique
input-output flow allocation that maximizes the total output flow
of the node, subject to constraints imposed by input demand, output supply,
split ratios, mutual restriction coefficients and input link priorities
(Section~\ref{subsec_mimo_relaxed}).

In freeways with activated HOV lane LOVs are confined to the GP lane,
whereas HOVs may choose between GP and HOV lanes based on traffic conditions,
and so split ratios for commodity representing HOV traffic may not be
a priori defined, but must be computed at runtime. 
Therefore, we introduced the algorithm for local traffic assignment
that computes a priori undefined split ratios at nodes based on
input demand, output supply and input link priorities
(Section~\ref{sec_sr_assignment}).

The above mentioned results are presented in the form of constructive
computational algorithms that are readily implementable in
traffic simulation software.

\section*{Acknowledgements}\label{sec_acknowledgement}
We would like to express great appreciation to our colleagues
Elena Dorogush and Ajith Muralidharan for sharing ideas,
Ramtin Pedarsani, Brian Phegley and Pravin Varaiya for their
critical reading and their help in clarifying some theoretical issues.

This research was funded by the California Department of Transportation.

\appendix
\section{Notation}\label{app_notation}
\begin{longtable}[l]{l p{4.5in} l}
	Symbol & Definition & Used in section(s) \\
	\hline
	$t$ & Timestep index & \ref{sec_lnctm}\\
	$T$ & Final timestep & \ref{sec_lnctm} \\
	$k$ & Iteration index & \ref{sec_node_model},\ref{sec_sr_assignment}\\
	$l$ & Generic link index & \ref{sec_lnctm}\\
	$\mathcal{L}$ & Set of all links & \ref{sec_lnctm} \\
	$\nu$ & Generic node index & \ref{sec_lnctm} \\
	$\mathcal{N}$ & Set of all nodes & \ref{sec_lnctm} \\
	$i$ & Index of links entering a node & \ref{sec_node_model},\ref{sec_sr_assignment}\\
	$M$ & Number of links entering a node & \ref{sec_node_model},\ref{sec_sr_assignment}\\
	$j$ & Index of links exiting a node & \ref{sec_node_model},\ref{sec_sr_assignment}\\
	$N$ & Number of links exiting a node & \ref{sec_node_model},\ref{sec_sr_assignment}\\
	$c$ & Vehicle commodity index & \ref{sec_lnctm},\ref{sec_node_model},\ref{sec_sr_assignment}\\
	$C$ & Number of vehicle commodities & \ref{sec_lnctm},\ref{sec_node_model},\ref{sec_sr_assignment}\\
	$n_l^c$ & Density of vehicle commodity $c$ in link $l$ & \ref{sec_lnctm}\\
	$f_{ij}^c$ & Flow of vehicle commodity $c$ from link $i$ to link $j$ & \ref{sec_lnctm},\ref{sec_node_model} \\
	$F_l$ & Capacity of link $l$ & \ref{sec_lnctm},\ref{sec_node_model}\\
	$v_l$ & Speed of flow in link $l$ & \ref{sec_lnctm}\\
	$v_l^{f}$ & Freeflow speed of link $l$ & \ref{sec_lnctm} \\
	$w_l$ & Congestion wave speed of link $l$ & \ref{sec_lnctm} \\
	$n_l^J$ & Jam density of link $l$ & \ref{sec_lnctm} \\
	$n_l^{-}$ & Low critical density of link $l$ & \ref{sec_lnctm} \\
	$n_l^{+}$ & High critical density of link $l$ & \ref{sec_lnctm} \\
	$\beta_{ij}^c$ & Split ratio of $c$ vehicles from link $i$ to link $j$ & \ref{sec_lnctm},\ref{sec_node_model},\ref{sec_sr_assignment}\\
	$\eta_{jj'}^i$ & Mutual restriction coefficient of link $j$ onto link $j'$ for link $i$ & \ref{sec_lnctm},\ref{sec_node_model} \\
	$\theta_l$ & Congestion metastate of link $l$ & \ref{sec_lnctm} \\
	$S_l^c$ & Demand for commodity $c$ of link $l$ & \ref{sec_lnctm},\ref{sec_node_model}\\
	$\St_l^c(k)$ & Adjusted demand for commodity $c$ of link $l$ as of iteration $k$ & \ref{sec_node_model} \\
	$R_l$ & Supply of link $l$ & \ref{sec_lnctm},\ref{sec_node_model}\\
	$p_i$ & Priority of input link $i$ & \ref{sec_node_model},\ref{sec_sr_assignment} \\
	$U(k)$ & Set of input links whose flows have yet to be fully determined as of iteration $k$ & \ref{sec_node_model} \\
	$V(k)$ & Set of output links whose flows have yet to be fully determined as of iteration $k$ & \ref{sec_node_model} \\
	$\pt_i(k)$ & Adjusted priority of link $i$ at iteration $k$ & \ref{sec_node_model},\ref{sec_sr_assignment} \\
	$\Rt_j(k)$ & Adjusted supply of link $j$ at iteration $k$ & \ref{sec_node_model} \\
	$U_j(k)$ & Set of input links contributing to link $j$ whose flows are undetermined as of iteration $k$ & \ref{sec_node_model} \\
	$\pt_{ij}(k)$ & Oriented priority from link $i$ to $j$ at iteration $k$ & \ref{sec_node_model},\ref{sec_sr_assignment} \\
	$a_j(k)$ & Restriction term of link $j$ at iteration $k$ & \ref{sec_node_model} \\
	$a_{j^\ast}(k)$ & Smallest (most restrictive) restriction term at iteration $k$ & \ref{sec_node_model} \\
	$\alpha_j(k)$ & Reduction factor of link $j$ at iteration $k$ & \ref{sec_node_model} \\
	$\Ut(k)$ & Set of input links whose demand can be fully met by downstream links at iteration $k$& \ref{sec_node_model} \\
	$\BB$ & Set of commodity movement triples $(i,j,c)$ whose split ratios are known & \ref{sec_sr_assignment} \\
	$\OBB$ & Set of commodity movement triples $(i,j,c)$ whose split ratios are unknown & \ref{sec_sr_assignment} \\
	$\obeta_i^c$ & Unassigned portion of the split ratios of commodity $c$ from link $i$ & \ref{sec_sr_assignment} \\
	$\So_i^c$ & Unassigned demand of commodity $c$ from link $i$ & \ref{sec_sr_assignment} \\
	$\Vt(k)$ & Set of output links to which unassigned demand may still be assigned as of iteration $k$ & \ref{sec_sr_assignment} \\
	$\mu^{+}(k)$ & Largest oriented demand-supply ratio at iteration $k$ & \ref{sec_sr_assignment} \\
	$\mu^{-}(k)$ & Smallest oriented demand-supply ratio at iteration $k$ & \ref{sec_sr_assignment} \\
\end{longtable}

\bibliographystyle{abbrvnat}
\bibliography{traffic}

\end{document}